%% file: arxiv.tex
\documentclass[11pt]{article}
\usepackage[margin=1in]{geometry}

\usepackage{amsmath, amssymb, amsthm}
\usepackage[mathic]{mathtools}

\usepackage{array}
\usepackage{booktabs}
\usepackage{multirow,multicol}

\usepackage{authblk}

\usepackage[T1]{fontenc}
\usepackage{libertine}
\usepackage[libertine]{newtxmath} 
\usepackage[scaled=0.96]{zi4} 

\usepackage{a4wide, dsfont, microtype, xcolor, paralist}

\usepackage{enumitem}

\usepackage[ocgcolorlinks]{hyperref} 
\colorlet{DarkRed}{red!50!black}
\colorlet{DarkGreen}{green!50!black}
\colorlet{DarkBlue}{blue!50!black}
\hypersetup{
	linkcolor = DarkRed,
	citecolor = DarkGreen,
	urlcolor = DarkBlue,
	bookmarksnumbered = true,
	linktocpage = true
}

\usepackage{orcidlink}  

\input{./arxiv_macros.tex}

\newcommand{\oracle}{\texttt{SM}}

\makeatletter
\newcommand\notsotiny{\@setfontsize\notsotiny\@vipt\@viipt}
\makeatother

\begin{document}

\author{Riccardo Maso}

\author{Nicola Prezza}

\author{Carlo Tosoni}

\affil{DAIS, Ca' Foscari University of Venice, Italy\\

\small{ \{ \texttt{riccardo.maso}, \texttt{nicola.prezza}, \texttt{carlo.tosoni}\}\texttt{\@unive.it}}}

\title{Faster Cache-Efficient Pattern Matching for Deterministic Wheeler Pangenome Graphs}
\date{}

\maketitle
\begin{abstract}
  Pattern matching on strings is regarded as one of the core operations in computer science. Although researchers proposed several solutions to this problem, some of the most elegant and widely used approaches are based on the renowned Burrows-Wheeler transform (BWT). The success of the BWT lies in its pattern matching algorithm known as backward search, which is not only near-optimal in the RAM model, but also runs directly on a compressed representation of the input string. More recently, the backward search has been generalized to Wheeler deterministic finite automata (DFAs), a subclass of standard DFAs, without losing its near-optimal time efficiency. Similarly to the case of strings, this pattern matching algorithm for Wheeler DFAs has found applications in bioinformatics, where researchers have shown that specific pangenome graphs of human chromosomes can be transformed into Wheeler DFAs and consequently indexed using this strategy. However, this BWT-based index on Wheeler DFAs inherited a significant drawback from the original backward search, namely the high number of I/O operations triggered during the algorithm execution, which are in the worst-case lower-bounded by the length of the pattern. In this paper, we address this limitation by proposing the first cache-friendly algorithm specifically designed for Wheeler DFAs. Our new data structure reduces the number of I/O operations by employing a strategy analogous to the suffix array: it interleaves a binary search with fast sequential scans of the automaton. We empirically validate this new indexing strategy by running our algorithm on real-world Wheeler pangenome graphs. We show that while our data structure can use up to 15 times the space required by the backward search, it can also be $500$ times faster and able to process a single character of the pattern in less than $3$ ns.
\end{abstract}

\begin{acknowledgements}
  We would like to thank Daniel Puttini for providing us the pangenome graphs employed in the experimental section of the paper, and Davide Cenzato for  fruitful discussions on the implementation details.
  All authors are funded by the European Union (ERC, REGINDEX, 101039208). Views and opinions expressed are however those of the authors only and do not necessarily reflect those of the European Union or the European Research Council Executive Agency. Neither the European Union nor the granting authority can be held responsible for them.
\end{acknowledgements}


\section{Introduction}
Pattern matching, the process of identifying a pattern $P$ within a text $T$, is a fundamental operation in computer science. 
This task finds applications in several domains, with bioinformatics being a notable example, where locating the occurrences of a pattern can be used to perform sequence alignment or map short sequencing reads to a large reference genome~\cite{bowtie, surveySequenceAlignment}.
Due to these reasons, this problem has been deeply studied in the literature where various solutions have been proposed.
Some of these solutions pre-process the pattern $P$ to achieve search times proportional to the length of the text $T$~\cite{KnuthMP77}, others reverse this paradigm by preprocessing the text $T$ and their query times depend on the length of the pattern $P$~\cite{suffix-tree-Weiner,suffix-tree-McCreight,suffix-array}.
This latter group includes the pattern matching algorithms based on the Burrows-Wheeler transform (BWT)~\cite{burrows1994block}, including the renowned FM-index~\cite{FM-index}.
The importance of the FM-index lies in its dual capability to provide efficient query time, while compressing the text into its empirical entropy.
This \emph{pattern matching} problem naturally generalizes to a (potentially infinite) collection of strings, where the objective is to identify whether a pattern $P$ occurs as a substring within this collection.
A common approach to represent this collection is to employ a finite-state automaton, accepting the collection of strings as its regular language.
However, contrary to the case of strings, pattern matching on general automata cannot be solved in strongly subquadratic time unless the Orthogonal Vectors Hypothesis~\cite{OVH} is false, as proven by Equi et al.~\cite{EquiMT21,EquiMTG23}.
In order to overcome this complexity constraint, research has focused on identifying subclasses of automata that allow for efficient pattern matching. 
This led to the definition of \emph{Wheeler NFAs}~\cite{Wheeler-graphs}, which represents a natural generalization of the original BWT to automata.
Informally, an NFA is said to be Wheeler if there exists a total order of its states \emph{consistent} with the strings entering into them.
This total ordering represents a natural generalization of the permutation returned by the original string BWT, which sorts the string characters based on their subsequent suffixes.
This line of research has found applications in bioinformatics, where genomic collections may be represented by using a pangenome graph, encoding the genetic diversity of a population~\cite{pangenome-cons}.
Indeed, Sirén et al.~\cite{Jouni-indexing-graphs} showed that some pangenome graphs built from a collection of human chromosomes can be transformed into Wheeler NFAs for supporting queries across the genetic variation of a population.
The current state-of-the-art algorithm for indexing Wheeler NFAs relies on the \emph{forward search}, a generalization of the well-known backward search used by the string BWT, but processing the pattern in a left-to-right fashion.
While in the RAM model this forward search runs in near-optimal $\widetilde O(\lvert P \rvert)$ time, in the I/O model it inherited a notable bottleneck from the backward search, namely the high number of cache misses, which in the worst-case are lower-bounded by the length of the pattern.

In the string domain, the suffix array and tree support cache-friendly solutions for this problem.
Indeed, the occurrences of an input pattern can be located by means of a binary search on the suffix array, combined with fast sequential scans of the matching regions in the text.
However, these solutions have not yet been generalized to Wheeler NFAs.
This represents a significant drawback for the usage of Wheeler NFAs in real-world applications, since in the era of the so-called big data applications time efficiency has become as paramount as space complexity for indexing large-scale datasets.

\subsection{Our contributions}

In this paper, we address this limitation by proposing the first cache-friendly algorithm specifically designed for \emph{deterministic} Wheeler automata.
We term our new indexing strategy the \emph{Graph Suffix Array} as it represents a natural generalization of the aforementioned pattern matching algorithm based on the suffix array of a string.
Indeed, our algorithm locates the occurrences of a pattern by alternating a binary search with fast linear scans of the input Wheeler DFA, thus achieving a drop in the overall number of cache misses triggered during the algorithm execution.
Specifically, our algorithm works in three distinct phases.
During the first phase, we identify the longest prefix of the input pattern $P$ occurring at least twice in the Wheeler automaton.
We show that this operation can be performed efficiently in the I/O model by running a binary search on the list of \emph{infimum} and \emph{supremum} strings~\cite{match-stat, suff-double}, which are respectively the smallest and largest strings entering into each state of the Wheeler DFA.
These infimum and supremum strings can be encoded using a pseudoforest, i.e., a forest where every tree node has at most one incoming edge, which ensures that every tree contains at most one cycle.
As a consequence of this, it follows that we can employ a specialized version of the \emph{heavy-light decomposition} to achieve a cache-efficient traversal of these pseudotrees.
Due to these observations, we show that the following result holds.

\begin{lemma}\label{lemma: longest prefix}
    Let $\mathcal D$ be a Wheeler DFA of $n$ states, and let $P$ be an input pattern of length $m$.
    Then we can compute the longest prefix $\beta$ of $P$ reaching at least two distinct states in $\mathcal D$ using $\mathcal O( (\frac{m}{B}+\log n)\log n \log m )$ operations in the I/O model, where $B$ is the block size.
\end{lemma}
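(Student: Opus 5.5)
The plan is to reduce the computation of $\beta$ to a binary search over the Wheeler order of the states, using the sorted lists of infimum strings $\inf_1 \le \dots \le \inf_n$ and supremum strings $\sup_1 \le \dots \le \sup_n$. In a Wheeler DFA the set of states reached by a string $\alpha$ is an interval $[\ell_\alpha, r_\alpha]$ of the Wheeler order. A state $u$ is reached by $\alpha$ (as a suffix of a string entering $u$) if and only if $\alpha$ lies between $\inf_u$ and $\sup_u$ in the co-lexicographic sense relevant to suffixes.

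Since we are matching prefixes of $P$ read forward, I will work with reversed strings. A state $u$ lies in $I_\alpha$ exactly when the reverse of $\alpha$ is a prefix of some string between $\overleftarrow{\inf_u}$ and $\overleftarrow{\sup_u}$. Both these lists are sorted lexicographically by the Wheeler property. Hence $\ell_\alpha$ and $r_\alpha$ can be found by two binary searches: one locating the first $u$ with $\sup_u$ "$\ge$" $\alpha$, and one locating the last $u$ with $\inf_u$ "$\le$" $\alpha$, each using $O(\log n)$ string comparisons.

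To find $\beta$, I observe that $|I_{P[1..k]}|$ is non-increasing in $k$. The longest prefix reaching at least two states can therefore be found by an outer binary search on $k \in [0,m]$, which accounts for the $\log m$ factor. Each probe needs one interval computation, costing $O(\log n)$ comparisons. The whole proof then reduces to showing that a single comparison between a pattern substring of length at most $m$ and $\inf_u$ or $\sup_u$ costs $O(m/B + \log n)$ I/Os. Together these give $O((m/B+\log n)\log n \log m)$.

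The comparison step is the main obstacle, because $\inf_u$ and $\sup_u$ are not stored explicitly and may even be infinite. I use the fact recalled above: following the minimum (respectively maximum) incoming edge from each state gives a pseudoforest, and $\inf_u$ read backwards is the label sequence of the unique path leaving $u$ in it. Comparing $\alpha$ against $\inf_u$ is then a walk of up to $|\alpha|\le m$ steps along this pseudoforest, matching labels against $P$ scanned sequentially in $O(m/B)$ I/Os.

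To make the walk cheap, I apply the heavy-light decomposition to each pseudotree, after cutting the unique cycle and storing it as its own path. Each heavy path is laid out contiguously with its labels, so reading $t$ consecutive labels on one heavy path costs $O(t/B+1)$ I/Os. Any root-ward walk crosses $O(\log n)$ light edges. Once it enters the cycle, it wraps around a contiguous array, possibly several times, which is still a sequential scan.

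The total cost of one comparison is therefore $O(m/B + \log n)$. I expect the delicate points to be the cycle handling and the proof that the monotone predicate "reaches $\ge 2$ states" is correctly decided by the two boundary searches. The latter includes the empty prefix and the case where $\beta$ reaches exactly the interval endpoints.
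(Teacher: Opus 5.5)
Your architecture and I/O accounting match the paper's: an outer binary search over the prefix length (using monotonicity of $\lvert T(\cdot)\rvert$), an inner search with $O(\log n)$ comparisons over the co-lexicographically sorted infimum and supremum strings, and each comparison done by walking the heavy-light-decomposed infimum/supremum pseudoforests with a linearized cycle, for $O(\frac{m}{B}+\log n)$ I/Os. The gap is in the correctness of the inner search.

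\textbf{The false step.} You build it on the claim that $u\in T(\alpha)$ if and only if $\alpha$ lies between $\inf I_u$ and $\sup I_u$. This is false, because $I_u$ need not be co-lexicographically convex. In the DFA of Figure~\ref{fig: example}, $\inf I_8$ ends with $ab$ and $\sup I_8$ ends with $cb$. Hence every left-infinite string suffixed by $bbb$ lies strictly between them. Yet $T(bbb)=\emptyset$: the only state reached by $bb$ is $8$ (via $3\to 7\to 8$ and $16\to 10\to 8$), and $8$ has no outgoing $b$-transition. Your two searches nevertheless return $\ell_{bbb}=r_{bbb}=8$. So they do not compute $T(\alpha)$. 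They compute a possibly larger interval $S(\alpha)\supseteq T(\alpha)$: the states whose range $[\inf I_u,\sup I_u]$ meets the set of strings suffixed by $\alpha$. You list the correctness of the predicate as a ``delicate point'' but give no argument, and it does not follow from what you wrote. It is exactly the content of the paper's Lemma~\ref{lemma:T-tilde-relation}.

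\textbf{The missing argument.} Suppose $\ell<r$. By definition, $\sup I_\ell$ is not below the strings suffixed by $\alpha$, and $\inf I_r$ is not above them. Combined with $\sup I_\ell\preceq\inf I_r$ (Lemma~\ref{lemma:iff-sup-inf}), this forces $\alpha\dashv\sup I_\ell$ and $\alpha\dashv\inf I_r$. The strings suffixed by $\alpha$ form a co-lexicographic interval. Hence for every $u$ with $\ell<u<r$, both $\inf I_u$ and $\sup I_u$ are suffixed by $\alpha$. Since these strings belong to $I_u$, you get $S(\alpha)\subseteq T(\alpha)$ whenever $\lvert S(\alpha)\rvert\ge 2$. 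Together with $T(\alpha)\subseteq S(\alpha)$, this gives $\lvert S(\alpha)\rvert\ge 2$ if and only if $\lvert T(\alpha)\rvert\ge 2$, which is all your outer search needs.

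With this repair your predicate matches the lemma's ``at least two states'' wording literally. The paper's search instead targets the longest prefix with $\widetilde T(\beta)\neq\emptyset$. The paper avoids your larger set altogether: its oracle locates only entries of $\mathcal K_{\mathcal D}$ actually suffixed by $\alpha$, i.e.\ it computes $\widetilde T(\alpha)$, and then applies Lemma~\ref{lemma:T-tilde-relation}.
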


Therefore, if this prefix $\beta$ is equal to $P$, then we return the states reached by $\beta$ and we are done.
Otherwise, we know that the prefix $\beta a$ of $P$ can reach at most a single state of the DFA.
Thus, in the second phase of the algorithm execution we use standard algorithms and data structures on Wheeler automata~\cite{Wheeler-graphs} to check in $\mathcal O(\log \log \sigma)$ I/O operations if there exists a transition labeled $a$ outgoing from these states reached by $\beta$.
If this transition exists, then we conclude that the string $\beta a$ reaches a state $u$ of the DFA.
Therefore, in this last phase of the algorithm we start navigating the DFA from $u$, by following the remaining string suffix $\gamma$, where $P = \beta a \gamma$.
In order to speed up this last phase, we partition the Wheeler DFA in \emph{unary paths}, that is, maximal sequences of connected states having in-degree equal to $1$.
By doing so, we bound the number of cache misses in this last phase by $\mathcal O(\frac{m}{B} + d)$, where the parameter $d$ indicates the number of unary paths traversed while following $\gamma$ in the DFA.
As a consequence of this, by summing up the time complexities of these three phases, we obtain the following result.

\begin{theorem}\label{thm: main graph suffix array}
    Let $\mathcal D$ be a Wheeler DFA of $n$ states, and let $P$ be an input pattern of length $m$.
    Then there exists a data structure taking $\mathcal O(\lvert \mathcal D \rvert)$ RAM words and able to locate the occurrences of $P$ in $\mathcal D$ in at most $\mathcal O( (\frac{m}{B}+ \log n)\log n \log m + d)$ operations in the I/O model, with $B$ being the block size.
\end{theorem}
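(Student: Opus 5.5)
The plan is to assemble the data structure from three components, one per phase of the query algorithm, and to bound the space and the I/O cost of each phase separately before summing.

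\textbf{Phase 1: longest repeated prefix.} Build the pseudoforest encoding of the infimum and supremum strings, stored in heavy-light decomposed layout, as required by Lemma~\ref{lemma: longest prefix}. Each state contributes $\mathcal O(1)$ parent pointers and labels, so this takes $\mathcal O(n)$ words. By Lemma~\ref{lemma: longest prefix}, the longest prefix $\beta$ of $P$ reaching at least two states is found in $\mathcal O((\frac{m}{B}+\log n)\log n\log m)$ I/Os. I will also need, as a by-product of the binary search, the interval $[\ell,r]$ of states (in Wheeler order) reached by $\beta$. I expect this to follow from the lemma's proof, because the boundaries of the interval are exactly the positions where $\beta$ stops being a prefix of the infima and suprema. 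If $\beta=P$, we output $[\ell,r]$ and stop.

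\textbf{Phase 2: one transition.} Otherwise $P=\beta a\gamma$, and $\beta a$ reaches at most one state. We store the standard Wheeler DFA forward-search structures of~\cite{Wheeler-graphs}: the outgoing-label sequence with rank/select, plus the $C$ array. These take $\mathcal O(|\mathcal D|)$ words. With them we compute the interval reached by $\beta a$ from $[\ell,r]$ in $\mathcal O(\log\log\sigma)$ I/Os. This cost is absorbed by the first term since $\sigma\le n$. If the interval is empty, we report no occurrence. Otherwise it is a single state $u$, by the maximality of $\beta$.

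\textbf{Phase 3: walking along $\gamma$.} From $u$ the DFA is deterministic, so the occurrence set stays a single state or becomes empty. Partition the states into maximal unary paths and store each path contiguously: its label string packed in consecutive memory, together with, at its last state, the out-transitions of that state. This is again $\mathcal O(|\mathcal D|)$ words, because every state belongs to exactly one path and every edge is stored once. Following $\gamma$, we compare it against the label string of the current path by a sequential scan, costing $\mathcal O(1+k/B)$ I/Os for $k$ characters matched. When we leave a path, we either mismatch, and stop, or take an out-transition into a new path. That transition is a single lookup: a binary search over the sorted out-labels, or a perfect hash to keep it $\mathcal O(1)$ I/Os.

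One detail needs care: entering a path in the middle. Only the initial state $u$ can lie in the middle of a path, because transitions taken by the walk land on the starting state of a path. The reason is that a state of in-degree $1$ inside a path has its unique predecessor on the same path, so a transition arriving from outside the path must land on its first state. Consequently $u$ costs one extra random access, which I store as a position pointer per state. Summing over the $d$ traversed paths, the cost is $\mathcal O(\frac{m}{B}+d)$.

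\textbf{Total.} Adding the three phases gives $\mathcal O((\frac{m}{B}+\log n)\log n\log m+d)$ I/Os, using $\mathcal O(|\mathcal D|)$ words of space. The main obstacle I expect is the claim that Phase 1 also yields the state interval of $\beta$, not only its length, without extra I/Os. If the lemma's proof does not provide it directly, I will recover it instead with two further binary searches over the sorted infima and suprema, comparing against $\beta$. Each comparison uses the same heavy-light traversal, which keeps the cost within the stated bound.
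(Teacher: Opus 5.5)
Your proposal is correct and follows the paper's proof essentially step for step. It covers Lemma~\ref{lemma: longest prefix} via the HLD-laid-out infimum/supremum pseudoforests, with an extra oracle call on $\beta$ to recover the interval of $T(\beta)$ exactly as \texttt{LocateStates} does. It then uses a rank-based step with the Gagie et al.\ structures for the transition on $a$, and a walk over contiguously stored maximal unary paths linked by perfect hashing, with the same space and I/O accounting.

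The differences are minor:
\begin{itemize}
\item You obtain $T(\beta a)$ by one forward-search step instead of calling $\oracle(\beta a)$ and testing for an outgoing $a$-edge from $T(\beta)$. That step also needs the in-degree bitvector, which still fits the space budget.
\item You make explicit two points the paper leaves implicit: the per-state pointer from Wheeler rank to layout position, and the fact that transitions between unary paths land on path heads.
\end{itemize}
One small correction: $\beta$ is a suffix, not a prefix, of the relevant infima and suprema.
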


In this theorem, the value $\lvert \mathcal D \rvert$ denotes the number of states and transitions of $\mathcal D$.
Specifically, the space complexity of $\mathcal O(\lvert \mathcal D \rvert )$ stems from the third phase of the algorithm, which requires a plain representation of the input automaton.
As the parameter $d$ is crucial for the time efficiency of the Graph Suffix Array, we carry out experiments on real-world Wheeler deterministic pangenomes, built over human chromosomes, to experimentally assess the typical values of $d$ in these graphs. 
We empirically show that, due to the high sparseness of these pangenome graphs~\cite{pang-unary}, this parameter $d$ is about two orders of magnitude smaller than the pattern length.
We also run experiments showing that our algorithm is up to $500$ times faster compared to the original pattern matching algorithm for Wheeler automata based on the aforementioned forward search.
As a limitation of our approach, the space usage of $\mathcal O(\lvert \mathcal D \rvert)$ RAM words is asymptotically higher than the $O(\lvert \mathcal D\rvert \log \sigma )$ bits required by the forward search~\cite[Lemma 4]{Wheeler-graphs}, where $\sigma$ is the alphabet size.
This spatial analysis is also empirically validated, as we show that our data structure can use up to $15$ times the space required by the forward search.
However, in our experiments the space usage of our data structures never exceeds $1.5$ GB, which implies that our solution can be easily stored in the RAM.

Finally, as a direct consequence of Lemma~\ref{lemma: longest prefix}, we show that if we aim to locate only those patterns occurring at least twice in the automaton, then we can achieve a tighter time and space complexity.

\begin{corollary}\label{cor: GrSA for patterns occurring twice}
    Let $\mathcal D$ be a Wheeler DFA of $n$ states, and let $P$ be an input pattern of length $m$.
    Then there exists a data structure taking $\mathcal O(n)$ RAM words which returns the set $T$ of states reached by $P$, if $\lvert T \rvert \geq 2$, and the emptyset $\emptyset$ otherwise.
    The overall number of operations in the I/O model is bounded by $\mathcal O( (\frac{m}{B}+ \log n)\log n)$, where $B$ is the block size.
\end{corollary}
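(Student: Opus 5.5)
We obtain the corollary from the first phase of the Graph Suffix Array alone, i.e.\ Lemma~\ref{lemma: longest prefix}, without storing the plain DFA needed for the third phase. The data structure stores only what the first phase uses. This consists of the infimum and supremum strings of the $n$ states, encoded as a pseudoforest with one parent pointer per state, together with its heavy-light decomposition, and the Wheeler order of the states. Since a pseudoforest on $n$ nodes has $n$ edges, and the decomposition stores $\mathcal O(1)$ words per node, the space is $\mathcal O(n)$ RAM words. This is independent of the number of transitions.

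For a query, we run the procedure of Lemma~\ref{lemma: longest prefix} on $P$ to get the longest prefix $\beta$ of $P$ reaching at least two distinct states. Because $\mathcal D$ is Wheeler, the states reached by any string form a contiguous interval in the Wheeler order. This interval consists of the states whose infimum/supremum range is compatible with the string, so the binary search returns it. We then split into two cases.
\begin{itemize}
\item If $\beta = P$, then $T$ is this interval and $\lvert T\rvert \ge 2$, so we return it.
\item If $\beta \neq P$, then by maximality every longer prefix, and in particular $P$ itself, reaches at most one state. So $\lvert T\rvert \le 1$ and we correctly return $\emptyset$.
\end{itemize}
Prefix-closure matters here: if $P$ reached at least two states, every prefix of $P$ would as well, because a DFA maps distinct predecessors to distinct states only along distinct paths. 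More precisely, distinct states reached by $P$ have distinct predecessors reached by each prefix, since in a DFA the path labeled by a string from the source is determined by its endpoint's ancestry. Thus the number of states reached is non-increasing along the prefix order.

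The remaining point is the improved bound $\mathcal O((\frac mB+\log n)\log n)$, which drops the $\log m$ factor of Lemma~\ref{lemma: longest prefix}. We do not need the longest such prefix, only a yes/no decision for $P$ itself. So instead of binary searching over prefix lengths, which is where the $\log m$ comes from, we perform a single binary search over the $\mathcal O(n)$ sorted infimum/supremum strings with $P$ as the key. Each of the $\mathcal O(\log n)$ steps compares $P$ against an infimum or supremum string. Using the heavy-light traversal of the pseudoforest, one comparison costs $\mathcal O(\frac mB + \log n)$ I/Os: the $\frac mB$ term covers the sequential scans along heavy paths, and the $\log n$ term covers the light-edge switches. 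The cycle of a pseudotree is handled by unrolling, as in the lemma.

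The result of this search is the interval of states whose infimum is $\le P$-prefix-compatible and whose supremum is $\ge$. We then check whether that interval contains at least two states whose infimum or supremum strings have $P$ as a prefix. This needs $\mathcal O(1)$ further comparisons.

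The main obstacle is verifying this last step: that the reached set equals the interval delimited by these two binary searches, with no false positives between the endpoints. We would argue it via the standard Wheeler property that $P$ reaches $u$ iff $\inf(u) \preceq P\cdot\infty$ and $P$ is a suffix-compatible prefix of $\sup(u)$, i.e.\ the matching-statistics characterization of~\cite{match-stat}. Concretely, $P$ reaches $u$ iff $P$ is a prefix of some string in $[\inf(u),\sup(u)]$ read in reverse-orientation as in the paper's convention. Contiguity then gives correctness.
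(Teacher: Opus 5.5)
Your final algorithm and its cost analysis match the paper's. It makes one call $\oracle(P)$, that is, binary searches over $\mathcal K_{\mathcal D}$ with $\mathcal O(\log n)$ comparisons. Each comparison costs $\mathcal O(\frac{m}{B}+\log n)$ I/Os via the heavy-light decompositions of the infimum and supremum pseudoforests, which occupy $\mathcal O(n)$ words. The detour through Lemma~\ref{lemma: longest prefix} can be dropped entirely.

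\textbf{The gap is correctness, at the step you yourself flag as the main obstacle.} You never show that $\widetilde T(P)$, the set of states whose infimum or supremum string is suffixed by $P$, equals $T(P)$ whenever $\lvert T(P)\rvert\ge 2$. The characterization you propose for this (``$P$ reaches $u$ iff $P$ suffixes some string co-lexicographically between $\inf I_u$ and $\sup I_u$'') is false in its `if' direction. In the automaton of Figure~\ref{fig: example}, $\inf I_8$ ends with $ab$ and $\sup I_8$ ends with $cb$, so every left-infinite string ending with $bbb$ lies strictly between them. Yet $T(bb)=\{8\}$ and state $8$ has no outgoing $b$-transition, so $T(bbb)=\emptyset$.

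The true `only if' direction is a per-state statement and never forces $P$ to suffix $\inf I_u$ or $\sup I_u$ themselves. Indeed $\widetilde T(bb)=\emptyset\neq\{8\}=T(bb)$. So the inclusion $T(P)\subseteq\widetilde T(P)$ genuinely needs $\lvert T(P)\rvert\ge 2$ and an argument involving two states. Appealing to contiguity of $T(P)$ does not supply this.

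\textbf{The missing ingredient} is the cross-state inequality of Lemma~\ref{lemma:iff-sup-inf}, combined with co-lexicographic convexity of the strings suffixed by $P$: if $\gamma_1\preceq\gamma\preceq\gamma_2$ and $P\dashv\gamma_1,\gamma_2$, then $P\dashv\gamma$. The argument runs as follows.
\begin{enumerate}
\item Let $u=\min T(P)<v=\max T(P)$, with witnesses $\beta_u\in I_u$ and $\beta_v\in I_v$ suffixed by $P$.
\item Then $\beta_u\preceq\sup I_u\preceq\inf I_v\preceq\beta_v$. Also $\sup I_u\preceq\inf I_w\preceq\sup I_w\preceq\inf I_v$ for every $w$ with $u<w<v$.
\item By convexity, $P$ suffixes $\sup I_u$, $\inf I_v$, and both $\inf I_w$ and $\sup I_w$. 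Since every state of $T(P)$ lies between $u$ and $v$, this gives $T(P)\subseteq\widetilde T(P)$.
\item Conversely, $\widetilde T(P)\subseteq T(P)$ always holds, because $\inf I_u,\sup I_u\in I_u$.
\end{enumerate}
Hence $\widetilde T(P)=T(P)$ when $\lvert T(P)\rvert\ge 2$, and $\lvert\widetilde T(P)\rvert\le\lvert T(P)\rvert\le 1$ otherwise. This is exactly why returning $\widetilde T(P)$ when it has at least two states, and $\emptyset$ otherwise, is correct. It is the content of Lemma~\ref{lemma:T-tilde-relation}, on which the paper's proof rests.

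Finally, by the definition of $\oracle$, the entries strictly inside the returned interval are exactly those suffixed by $P$. So no further comparisons are needed after the search.
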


The rest of the paper is organized as follows.
In Section~\ref{sec: preliminaries}, we introduce the notation and the preliminary concepts used in the paper.
In Section~\ref{sec:algorithm}, we prove some properties on Wheeler DFAs and we introduce our Graph Suffix Array.
In Section~\ref{sec:data-strucure}, we prove the time and space complexity of our novel pattern matching algorithms.
Finally, in Section~\ref{sec: experiments} we conduct experiments on real pangenome graphs using our new indexing strategy.

\section{Notation and preliminaries}\label{sec: preliminaries}

\paragraph{Relations and strings.} A relation $R$ over a set $U$ is a subset of the set $U \times U$.
In this paper, we work with \emph{total orders}, which are relations satisfying (i) reflexivity, (ii) antisymmetry, (iii) transitivity, and (iv) totality.
We use the symbols $\leq$ and $\preceq$ to denote total orders.
Given an alphabet $\Sigma$, we denote by $\Sigma^*$ the set of all finite strings over $\Sigma$, where $\varepsilon \in \Sigma^*$ denotes the empty string. 
In addition, we define $\Sigma^{\omega}$ as the set of all strings formed by an infinite enumerable concatenation of characters from $\Sigma$. 
In particular, we work with \emph{left-infinite} strings, meaning that $\alpha \in \Sigma^{\omega}$ is constructed from $\varepsilon$ by prepending an infinite number of characters to it. 
For a string $\alpha \in \Sigma^*$, we write $\alpha[i,j]$ to denote the substring consisting of characters from position $i$ through $j$, where $1 \leq i \leq j \leq \lvert \alpha \rvert$.
For two strings $\alpha\in\Sigma^*$ and $ \beta\in \Sigma^*\cup\Sigma^{\omega}$, we write $\alpha \dashv\beta$ to denote that $\alpha$ is a \emph{suffix} of $\beta$. 
Moreover, we assume to have a fixed total order $\preceq$ over the alphabet $\Sigma$, and we extend this order $\preceq$ to $\Sigma^*\cup\Sigma^{\omega}$ \emph{co-lexicographically} as follows. For every two strings $\alpha, \beta \in \Sigma^* \cup \Sigma^{\omega}$ we have that (i) $\varepsilon \preceq \alpha$ for every $\alpha \in \Sigma^* \cup \Sigma^{\omega}$, and
(ii) if $\alpha = \alpha' a$ and $\beta = \beta' b$ for some $\alpha', \beta' \in \Sigma^* \cup \Sigma^{\omega}$ and $a, b \in \Sigma$, then $\alpha \prec \beta$ if and only if $(a \prec b) \vee (a = b \wedge \alpha' \prec \beta')$.

\paragraph{DFA.} A deterministic finite automaton (DFA) is a $4$-tuple $\mathcal D = (Q,\Sigma,\delta, s)$ defined as follows.
$Q$ is the set of states, $\Sigma$ is a finite, nonempty set called the alphabet.
$\delta$ is a function $\delta: Q \times \Sigma \rightarrow Q\cup\{\bot\}$ mapping each pair $(u,a)$, with $u\in Q$ and $a\in \Sigma$, to either a state in $Q$ or to a special symbol $\bot$, indicating that no transition labeled by $a$ leaves the state $u$. 
Finally, $s \in Q$ is the unique initial state of the automaton.
Typically, a DFA is defined by a 5-tuple including the set of final states $F \subseteq Q$, which we omit since we are not interested in distinguishing between final and non-final states. 
Given a DFA $\mathcal{D} = (Q,\Sigma,\delta,s)$, a state $u \in Q$, and a character $a \in \Sigma$, we may use the shortcut $\delta_a(u)$ for $\delta(u,a)$.
Moreover, we extend the transition function to words $\alpha \in \Sigma^*$ as follows: let $a\in \Sigma, \alpha\in\Sigma^*$, and $u\in Q$, then $\delta(u,\alpha a)=\delta(\delta(u,\alpha),a)$ and $\delta(u,\varepsilon)=u$. 
We make the following assumptions on the DFAs we treat: (i) We assume the alphabet $\Sigma$ to be effective; meaning that every character labels at least one transition in the automaton.
(ii) We assume that every state is reachable from the initial state. 
(iii) The initial state $s$ has a single incoming transition $s = \delta(s, \#)$, where $\# \in \Sigma$ is a special symbol of the alphabet not labeling any other transition, and satisfying $\# \preceq a$ for all $a \in \Sigma$. 
We note that these assumptions are not restrictive, as any DFA can be transformed to satisfy them without changing its regular language. 
In the rest of the paper, we denote by $n$ the number of states in the input DFA, formally $n = \lvert Q \rvert$. 

\paragraph{Infimum and supremum strings.} Let $\mathcal D$ be a DFA, and let $u\in Q$.
We define $I_u$ as the set of all left-infinite strings obtained by following transitions backward starting from $u$, and prepending the corresponding character at every step.
Formally, we define the set $I_u$ as follows.
\begin{definition}
    Let $\mathcal{D}=(Q,\Sigma,\delta,s)$ be a DFA and $u\in Q$. We define $I_u$ as:
    \begin{align*}
        I_u=&\{\alpha\in \Sigma^{\omega}\mid \text{there exist an infinite sequence of states } u_1,u_2,\dots \text{ in } Q \text{ such that}\\
        & (i)\  u_1=u, (ii)\ u_i =\delta_a(u_{i+1}),\text{ for every }i\geq 1 \text{ where } a = \alpha_i)\}       
    \end{align*}

\end{definition}

In this definition we consider $\alpha_i$ to be the $i$-th character of $\alpha \in \Sigma^\omega$ from right to left, since $\Sigma^\omega$ is the set of \emph{left-infinite} strings.
We now define the infimum and supremum strings of a state $u \in Q$, as introduced by Conte et al.~\cite{match-stat} and Kim et al.~\cite{suff-double}, representing the lower and upper bounds of the strings in $I_u$, respectively.
Our definition aligns with that of Becker et al.~\cite{encoding-linear}.

\begin{definition}[Infimum and supremum strings]\label{def:inf-sup}
Let $\mathcal{D}=(Q,\Sigma,\delta,s)$ be a DFA, and let $u$ be a node in $Q$. 
Then, the infimum string of $u$, denoted by $\inf I_u$, is the co-lexicographically smallest string in $I_u$. 
The supremum string of $u$, denoted by $\sup I_u$, is the co-lexicographically largest string in $I_u$.
\end{definition}
See Figure~\ref{fig: example} for an example.
This definition is equivalent to the formulation originally introduced by Conte et al.~\cite{match-stat} and Kim et al.~\cite{suff-double}, with the distinction that in Definition~\ref{def:inf-sup} the infimum and supremum strings are always infinite strings.
The existence of $\inf I_u$ and $\sup I_u$ for each $u \in Q$ is guaranteed by Observation 8 of the article of Kim et al.~\cite{suff-double}.

\paragraph{Wheeler DFA.} In the following we report the definition of Wheeler DFAs, which was first introduced by Gagie et al.~\cite{Wheeler-graphs}. 
\begin{definition}[Wheeler DFA]
  Let $\mathcal{D}=(Q,\Sigma,\delta,s)$ be a DFA. A Wheeler order on $\mathcal{D}$ is a total order $\leq$ on $Q$ such that for every $v = \delta(u,a)$ and $v'=\delta(u',a')$ the following two axioms hold:
  \begin{enumerate}[label=\emph{(Axiom~\arabic*)}, leftmargin=6em, itemsep=0.01em]
    \item\label{axiom1} If $a \prec a'$, then $v<v'$.
    \item If $a=a'$ and $u < u'$, then $v<v'$.
  \end{enumerate}
  A DFA that admits a Wheeler order is called a \emph{Wheeler DFA} (WDFA).
\end{definition}

\ref{axiom1} implies a property known in the literature as \emph{input-consistency}: in a Wheeler DFA, all transitions entering a state $u$ must share the same label $a \in \Sigma$. Indeed, if two transitions entering $u$ had distinct labels, this would imply $u < u$, a contradiction.
Generally, Gibney and Thankachan~\cite{hardness-wheeler} showed that recognizing a Wheeler automaton is an NP-complete problem~\cite{hardness-wheeler}.
However, Alanko et al.~\cite[Theorem~3.3]{regular-prefix} proved that in the deterministic setting this problem can be solved in linear time, with respect to the DFA size.
Furthermore, in the same paper they also showed that a Wheeler DFA admits a unique Wheeler order $\leq$ which can be computed in linear time with respect to the automaton dimension as well.
Henceforth, given a Wheeler DFA $\mathcal D$ we denote by $u_1, u_2, \ldots, u_n$ the sequence of states of $\mathcal D$ sorted according to its unique Wheeler order $\leq$, i.e., $u_i \leq u_j$ holds if and only if $i \leq j$.
In the following, we recall that WDFAs satisfy a fundamental structural property, known in the literature as \emph{path coherence}.
This property ensures that the states reached by any string form a contiguous interval in the Wheeler state order. 

\begin{lemma}[Path-coherence]\cite[Lemma 3]{Wheeler-graphs}\label{lemma:path-coherence}
     Let $\mathcal{D}$ be a Wheeler DFA and $\leq$ its Wheeler order.
     Then for every string $\alpha \in \Sigma^*$ there exists an interval $[i,j]$, with $1 \leq i,j \leq n$, such that $u_i, u_{i+1}, \dots, u_j$ are those and only states of $\mathcal D$ reached by the string $\alpha$.
     If no state of $\mathcal D$ can be reached by $\alpha$, then $i > j$ holds.
\end{lemma}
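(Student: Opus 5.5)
We prove path-coherence by induction on $\lvert \alpha \rvert$, using only the two Wheeler axioms and our standing assumptions. Let $R(\alpha)$ be the set of states $\delta(s,\alpha')$ over strings $\alpha'$ having $\alpha$ as a suffix. For the statement we need: if $u \le w \le v$ and $u,v \in R(\alpha)$, then $w \in R(\alpha)$. Once this convexity holds, $R(\alpha)$ is a (possibly empty) interval $[i,j]$ in the order $u_1,\dots,u_n$. We set $i>j$ when $R(\alpha)=\emptyset$.

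\textbf{Base case.} For $\alpha=\varepsilon$, every state is reachable from $s$ by assumption (ii), so $R(\varepsilon)=Q=[1,n]$.

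\textbf{Inductive step.} Let $\alpha = \alpha' a$ and assume $R(\alpha')$ is an interval. Since $\delta$ is deterministic, $R(\alpha)=\{\delta_a(x) : x\in R(\alpha'),\ \delta_a(x)\neq\bot\}$. Take $u=\delta_a(x)$ and $v=\delta_a(y)$ in $R(\alpha)$ with $x,y\in R(\alpha')$, and a state $w$ with $u<w<v$.
\begin{itemize}
\item \emph{$w$ has an incoming edge labeled $a$.} By assumption (ii), $w$ is reachable from $s$, so $w$ has an incoming edge, say $w=\delta_b(z)$. If $b\prec a$, Axiom~1 applied to $w=\delta_b(z)$ and $u=\delta_a(x)$ gives $w<u$, a contradiction. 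If $a\prec b$, the symmetric argument gives $v<w$. Hence $b=a$.
\item \emph{$z$ lies between $x$ and $y$.} If $z\le x$, then $z\ne x$, because $\delta_a(x)=u\ne w$. So $z<x$, and Axiom~2 gives $w<u$, a contradiction. Symmetrically $y<z$ is impossible. Therefore $x<z<y$.
\item \emph{Conclusion.} By the inductive hypothesis, $z\in R(\alpha')$, hence $w=\delta_a(z)\in R(\alpha)$.
\end{itemize}
Note that input-consistency is not needed as a separate ingredient: Axiom~1 alone forces the label $b$ of the edge entering $w$ to equal $a$.

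\textbf{Main obstacle.} The only subtle point is guaranteeing that the intermediate state $w$ has some predecessor at all. We obtain this from reachability, assumption (ii). The case $w=s$ is harmless: its only incoming edge is the self-loop labeled $\#$, and the same axiom argument applies to it. The remaining bookkeeping is the empty case. If $R(\alpha')=\emptyset$, or no state of $R(\alpha')$ has an outgoing $a$-edge, then $R(\alpha)=\emptyset$ and we return an interval with $i>j$.
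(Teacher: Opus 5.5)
Your proof is correct. The paper does not prove this lemma itself but imports it from Gagie et al. Your argument is essentially the standard induction on $\lvert\alpha\rvert$ used there: Axiom~1 forces any state strictly between two $a$-successors to have an $a$-labeled in-edge, and Axiom~2 places that edge's source inside the previous interval, while your use of reachability and the $\#$ self-loop on $s$ replaces the ``in-degree-zero states come first'' condition of the original Wheeler graph definition.
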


As a consequence of path-coherence, it follows that a Wheeler order permutes the states on an automaton consistently with the strings reaching them.
Path-coherence also ensures that Wheeler NFAs can be indexed using a strategy analogous to other BWT-based data structures~\cite{Wheeler-graphs}, which is known in the literature as \emph{forward search}.
The next lemma from the article of Conte et al.~\cite{match-stat} establishes a connection between infimum/supremum strings and Wheeler orders in DFAs.
\begin{lemma}\cite[Lemma 3]{match-stat}\label{lemma:iff-sup-inf}
    Let $\mathcal{D}=(Q,\Sigma,\delta,s)$ be a WDFA and let $\leq$ be its Wheeler order. For any distinct states $u, v \in Q$ with $u < v$, it holds that $\sup I_u \preceq \inf I_v$.
\end{lemma}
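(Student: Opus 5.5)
The plan is to prove the statement by contradiction, using the two axioms of the Wheeler order and the fact that the order is total.

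Fix distinct states $u<v$. The first step is to take the strings $\alpha=\sup I_u$ and $\gamma=\inf I_v$, and to fix backward witness sequences $u=u^{(1)},u^{(2)},\dots$ for $\alpha$ and $v=v^{(1)},v^{(2)},\dots$ for $\gamma$. In these sequences $u^{(i)}=\delta_{\alpha_i}(u^{(i+1)})$, and similarly for $v$. Such witnesses exist because $\alpha\in I_u$ and $\gamma\in I_v$; the existence of the infimum and supremum is the cited Observation~8 of Kim et al.

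Now suppose, for contradiction, that $\gamma\prec\alpha$. Since both strings are left-infinite, they differ, so there is a smallest index $k\geq 1$ with $\gamma_k\neq\alpha_k$. Then $\gamma_i=\alpha_i$ for all $i<k$, and the co-lexicographic comparison forces $\gamma_k\prec\alpha_k$.

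The key step is an induction going forward along the paths, from position $k$ back toward position $1$.
\begin{itemize}
\item \emph{Base case.} The states $v^{(k)}$ and $u^{(k)}$ are entered by labels $\gamma_k\prec\alpha_k$. By Axiom~1 this gives $v^{(k)}<u^{(k)}$.
\item \emph{Inductive step.} Suppose $v^{(i+1)}<u^{(i+1)}$ for some $i<k$. Then $v^{(i)}$ and $u^{(i)}$ are both reached from these states by the same label $\alpha_i=\gamma_i$. By Axiom~2 this gives $v^{(i)}<u^{(i)}$.
\end{itemize}
Descending from $i=k-1$ to $i=1$ yields $v<u$, which contradicts $u<v$ because the Wheeler order is a total order and so antisymmetric. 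Hence $\alpha\preceq\gamma$, that is, $\sup I_u\preceq\inf I_v$.

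The main subtlety is the inductive step. Axiom~2 requires the two source states to be distinct and strictly ordered, and this is exactly what the induction hypothesis provides. At the base case one must also check that the mismatch at position $k$ is a genuine label comparison. Both strings have a character at every position, so the clause ``$\varepsilon\preceq\alpha$'' of the order is never invoked for infinite strings. Consequently the first differing position decides the order, and no other case can occur.
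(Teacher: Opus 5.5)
Your argument is correct. The paper gives no proof of this statement: it imports it as Lemma~3 of Conte et al., whose infimum and supremum strings may be finite, and simply asserts that the two formulations agree. Your proof is therefore a genuinely different, self-contained route. It works directly with the paper's own definition of $I_u$ through infinite backward walks and uses nothing but the two Wheeler axioms. The first mismatch from the right, at position $k$, gives $v^{(k)}<u^{(k)}$ by Axiom~1, and Axiom~2 carries the inequality along the shared labels $\alpha_{k-1},\dots,\alpha_1$ down to $v<u$. You never use that $\alpha$ is the supremum of $I_u$ or that $\gamma$ is the infimum of $I_v$, only that $\alpha\in I_u$ and $\gamma\in I_v$. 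So you have actually proved the stronger fact that $u<v$ implies $\alpha\preceq\gamma$ for all $\alpha\in I_u$ and $\gamma\in I_v$. Observation~8 of Kim et al.\ is needed only to make the symbols $\sup I_u$ and $\inf I_v$ meaningful.

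One caveat concerns the inductive step. You apply Axiom~2 with the strict conclusion $v<v'$, exactly as printed in the paper, but this is a misprint. The standard Wheeler axiom only yields $v\le v'$. The strict version would forbid two equally labelled transitions from entering the same state, and the paper's own example violates it: state $8$ is entered by $b$-transitions from states $6$, $7$, $10$ and $12$. Under the correct axiom your step only gives $v^{(i)}\le u^{(i)}$, so you need one more line:
\begin{itemize}
\item $\delta$ is a function and $\gamma_j=\alpha_j$ for all $j<k$, so an equality $v^{(j)}=u^{(j)}$ for some $j\le k$ would propagate forward to $v=u$, contradicting $u\neq v$.
\item Hence $v^{(i)}\neq u^{(i)}$ for every $i\le k$, the non-strict conclusion is automatically strict, and your induction goes through unchanged.
\end{itemize}
This is the point where determinism and the distinctness of $u$ and $v$ are genuinely used.
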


Finally, for an input DFA we define $T(\alpha)$ as the set of states reached by a walk whose transitions are labeled by the string $\alpha \in \Sigma^*$.
This set can be formally defined using the sets $I_u$ as follows: $T(\alpha) = \{u\in Q \mid (\exists \beta \in I_u)(\alpha \dashv \beta)\}$.
We recall that by Lemma~\ref{lemma:path-coherence} in Wheeler DFAs this set $T(\alpha)$ forms an interval according to its Wheeler order $\leq$.

\paragraph{Computational model.}
In this work, we analyze the time complexity of our algorithms in the I/O model. 
Specifically, we assume a two-level memory hierarchy, which is formed by the RAM and the cache.
The RAM is formed by contiguous blocks of size $B$ and the performance of our algorithms is strictly evaluated by the number of block transfers from the RAM to the cache.
For the analysis of the space complexity, we assume to have RAM words of size $\Omega(\log \lvert \mathcal D \rvert)$, where $\lvert \mathcal D \rvert$ is the input dimension.

\section{The Graph Suffix Array}\label{sec:algorithm}
In this section, we formally introduce our Graph Suffix Array for solving \texttt{count} and \texttt{locate} queries over a WDFA.
Formally, for every string $\alpha\in \Sigma^*$ the query \texttt{count($\alpha$)} returns the cardinality of the set $T(\alpha)$, while \texttt{locate($\alpha$)} returns the set $T(\alpha)$ itself.
We refer at this two queries as \emph{subpath queries}. 
As the set $T(\alpha)$ plays a central role for both queries, in this section we propose an efficient algorithm to compute it.
We start this section by proving some preliminary properties on this set $T(\alpha)$, which are at the core of the indexing strategy proposed in this article.
\begin{remark}\label{lemma:decreasing-matching}
    Let $\mathcal{D}$ be a DFA, $\alpha\in\Sigma^*$ and $a \in \Sigma$, then  $|T(\alpha a)| \leq |T(\alpha)|$.
\end{remark}
\begin{proof}
By definition, every state in $ |T(\alpha a)| $ is a successor of some state in $ |T(\alpha)| $ via a transition labeled $ a $. Because the transition function $ \delta $ is deterministic, each state has at most one successor for every symbol $a \in \Sigma$. Hence, the number of such successors cannot exceed the number of starting states, that is: $|T(\alpha a)| \leq |T(\alpha)|$.
\end{proof}
In the paper introducing Wheeler automata, Gagie et al.~\cite{Wheeler-graphs} presented a strategy for indexing the automaton based on the famous forward search, which characterizes BWT-based data structures.
We propose a new approach based on binary search, which takes inspiration from other renowned text indexing techniques~\cite{CSA}.
Our key idea is to derive $T(\alpha)$ through an intermediate set $\widetilde{T}(\alpha)$, whose purpose is to approximate it.
Formally, for a given DFA this approximated set is defined as $\widetilde{T}(\alpha)=\{u\in Q \mid \alpha \dashv \inf I_u \vee \alpha \dashv  \sup I_u \}$.
We now prove that in Wheeler DFAs, the set $\widetilde{T}(\alpha)$ exhibits a strong relationship with $T(\alpha)$. 

\begin{lemma}\label{lemma:T-tilde-relation}
    Let $\mathcal D$ be a Wheeler DFA and $\leq$ its Wheeler order.
    Then for every string $\alpha \in \Sigma^{*}$ the following properties hold:
    \begin{enumerate}[label=(\arabic*), leftmargin=3em, itemsep=0.01em]
    \item\label{lemma:T-tilde-1} If $\widetilde{T}(\alpha)\neq\emptyset$ then $\widetilde{T}(\alpha)=T(\alpha)$,
    \item\label{lemma:T-tilde-2} If $\widetilde{T}(\alpha)=\emptyset$ then $\lvert T(\alpha)\rvert\leq 1$.
  \end{enumerate}
  \end{lemma}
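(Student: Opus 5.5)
First observe that $\widetilde{T}(\alpha)\subseteq T(\alpha)$ always holds, because $\inf I_u,\sup I_u\in I_u$; this is the existence result of Kim et al.\ cited after Definition~\ref{def:inf-sup}. The substance of both items is therefore a converse: every state reached by $\alpha$, except possibly when only one state is reached, has $\alpha$ as a suffix of its infimum or of its supremum. I will use path-coherence (Lemma~\ref{lemma:path-coherence}) to write $T(\alpha)=\{u_i,\dots,u_j\}$ as an interval. I will also use Lemma~\ref{lemma:iff-sup-inf} to get the chain
\[
\inf I_{u_i}\preceq \sup I_{u_i}\preceq \inf I_{u_{i+1}}\preceq \cdots \preceq \sup I_{u_j}.
\]

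The key step is a claim: if $|T(\alpha)|\ge 2$, then $\alpha\dashv\inf I_u$ and $\alpha\dashv\sup I_u$ for every $u\in T(\alpha)$. For $i\le k\le j$ with $i<j$, I will prove $\alpha\dashv\sup I_{u_k}$ when $k<j$ and $\alpha\dashv\inf I_{u_k}$ when $k>i$, using a convexity property of co-lex order. Fix $k<j$ and choose $\beta\in I_{u_k}$ and $\gamma\in I_{u_j}$, both having $\alpha$ as a suffix. The chain gives $\beta\preceq\sup I_{u_k}\preceq\inf I_{u_j}\preceq\gamma$. The set of strings in $\Sigma^\omega$ with suffix $\alpha$ is co-lexicographically convex: if $\beta\preceq x\preceq\gamma$ and both $\beta,\gamma$ end with $\alpha$, then $x$ ends with $\alpha$. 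To see this, compare from the right; if $x$ first differs from $\alpha$ at some position, then $x$ is strictly below or strictly above both $\beta$ and $\gamma$, which is a contradiction. Hence $\alpha\dashv\sup I_{u_k}$, and symmetrically $\alpha\dashv\inf I_{u_k}$ for $k>i$. Since $i<j$, every $k$ satisfies $k<j$ or $k>i$, so every state of $T(\alpha)$ lies in $\widetilde T(\alpha)$. This gives $T(\alpha)\subseteq\widetilde T(\alpha)$ whenever $|T(\alpha)|\ge2$. Item~\ref{lemma:T-tilde-2} follows at once: if $|T(\alpha)|\ge2$ then $\widetilde T(\alpha)\supseteq T(\alpha)\neq\emptyset$.

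For item~\ref{lemma:T-tilde-1}, suppose $\widetilde T(\alpha)\ne\emptyset$. If $|T(\alpha)|\ge2$, the claim together with $\widetilde T\subseteq T$ gives equality. Otherwise $|T(\alpha)|\le1$, and $\emptyset\ne\widetilde T(\alpha)\subseteq T(\alpha)$ forces both sets to be the same singleton. The case $\alpha=\varepsilon$ needs no separate treatment: $\varepsilon$ is a suffix of everything, so $\widetilde T(\varepsilon)=Q=T(\varepsilon)$.

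I expect the main obstacle to be the convexity step for left-infinite strings. I will argue it carefully from the recursive definition of $\preceq$, inducting on $|\alpha|$ by peeling off the last character: the last characters of $\beta,x,\gamma$ must coincide, since otherwise the ordering fails immediately, and then the recursion applies to the prefixes. I also need the fact that $\sup I_u$ and $\inf I_u$ genuinely belong to $I_u$, which the cited existence result provides.
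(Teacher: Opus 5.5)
Your proof is correct. It uses the same ingredients as the paper:
\begin{itemize}
\item $\widetilde T(\alpha)\subseteq T(\alpha)$, because $\inf I_u,\sup I_u\in I_u$;
\item the sandwich $\beta\preceq\sup I_u\preceq\inf I_v\preceq\gamma$, from Lemma~\ref{lemma:iff-sup-inf};
\item co-lexicographic convexity of the set of strings suffixed by $\alpha$. The paper uses this tacitly; you state and prove it.
\end{itemize}

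What differs is the decomposition. The paper's proof of item (2) is exactly your key step, applied to a single pair $u<v$. Item (1), however, is proved separately and by contradiction. The paper takes the extreme states $v,w$ of $\widetilde T(\alpha)$ and places a hypothetical $u\in T(\alpha)\setminus\widetilde T(\alpha)$ in one of three positions: $v<u<w$, $u<v$, or $w<u$. You instead prove one statement, $|T(\alpha)|\ge 2\Rightarrow T(\alpha)\subseteq\widetilde T(\alpha)$, using the extremes of $T(\alpha)$ as witnesses. Both items then follow, item (1) via the trivial split on $|T(\alpha)|$. Your version is slightly more economical and shows that all the content sits in that one implication. The paper's version avoids the cardinality split but pays with the three-way case analysis.

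Two small corrections.

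First, the claim as you phrase it, with \emph{and}, is false. In the automaton of Figure~\ref{fig: example}, $T(ab)=\{u_7,u_8\}$, yet $\sup I_{u_8}$ ends with $cb$ (via $u_{12}$), so $ab\not\dashv\sup I_{u_8}$. Your argument actually establishes $\alpha\dashv\sup I_{u_k}$ for $k<j$ and $\alpha\dashv\inf I_{u_k}$ for $k>i$. That is the \emph{or} version, and it is all you use, so restate the claim accordingly.

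Second, path-coherence is unnecessary. Lemma~\ref{lemma:iff-sup-inf} applies directly to any pair $u_k<u_j$, and only the minimum and maximum of $T(\alpha)$ play a role. The full chain over consecutive states, and the fact that $T(\alpha)$ is an interval, can be dropped.
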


\begin{proof}

\emph{\ref{lemma:T-tilde-1}}
By definition we know that for every $u \in \widetilde{T}(\alpha)$ it holds that $\alpha \dashv \inf I_u \vee \alpha \dashv \sup I_u$.
Since by Definition~\ref{def:inf-sup} we know $\inf I_u, \sup I_u \in I_u$, it follows that $ u \in T(\alpha)$.
This proves $\widetilde{T}(\alpha) \subseteq T(\alpha)$.
Now let us assume that $\widetilde{T}(\alpha) \neq \emptyset$.
To complete the proof we have to show that $T(\alpha) \subseteq \widetilde{T}(\alpha)$. 
Let $v$ and $w$ be respectively the smallest and largest states in $\widetilde{T}(\alpha)$ according to $\leq$, and consider the infinite strings $\beta_v \in \{ \inf I_v, \sup I_v \}$ and $\beta_w \in \{ \inf I_w, \sup I_w \}$ satisfying $\alpha \dashv \beta_v, \beta_w$.
Suppose by contradiction that there exists a state $u\in T(\alpha) \setminus \widetilde{T}(\alpha)$.
If $v < u < w$, by Lemma~\ref{lemma:iff-sup-inf} and by Definition~\ref{def:inf-sup} we know that $\beta_v \preceq \inf I_u \preceq \sup I_u \preceq \beta_w$. However, $\alpha \dashv \beta_v \land  \alpha \dashv \beta_w $ implies $\alpha \dashv \inf I_u \land  \alpha \dashv \sup I_u $, a contradiction.
Therefore, let us now consider the case $u<v$, the remaining case $w<u$ is analogous. 
By Lemma~\ref{lemma:iff-sup-inf}, we know that $\sup I_u \preceq \beta_v$.
Moreover, since $u \in T(\alpha)$, there exists $\beta \in I_u$ such that $\alpha\dashv\beta$, and by definition of supremum string we have $\beta \preceq \sup I_u \preceq \beta_v$.
However, $\alpha \dashv \beta$ and $\alpha \dashv \beta_v$ imply $\alpha \dashv \sup I_u$, a contradiction.

\emph{\ref{lemma:T-tilde-2}} We show the contrapositive, i.e., that $\lvert T(\alpha)\rvert > 1$ implies $\widetilde{T}(\alpha) \neq \emptyset$.
Let $u,v\in T(\alpha)$ be two states such that $u < v$, it follows that there exist $\beta_u\in I_u$ and $\beta_v\in I_v$ satisfying $\alpha\dashv \beta_u,\beta_v$.
Moreover, Definition~\ref{def:inf-sup} and Lemma~\ref{lemma:iff-sup-inf} imply that $\beta_u\preceq \sup I_u\preceq\inf I_v \preceq \beta_v$.
This in turn implies that $\alpha \dashv\sup I_u$ and $\alpha \dashv\inf I_v$, since $\alpha$ is a proper suffix of both strings $\beta_u$ and $\beta_v$.
Therefore, we conclude that $\widetilde{T}(\alpha) \neq \emptyset$.
\end{proof}

We observe that this result implies also that the set $\widetilde{T}(\alpha)$, like $T(\alpha)$, forms a state interval with respect to the Wheeler order, since $\widetilde{T}(\alpha)=T(\alpha)$ or $\widetilde{T}(\alpha)=\emptyset$ hold.
In the following corollary we show that the cardinality of the set $\widetilde{T}(\alpha)$ is non-increasing.

\begin{corollary}\label{cor:tilde-dec-matching}
    Let $\mathcal{D}$ be DFA, $\alpha\in \Sigma^*$ and $a\in \Sigma$, then $\lvert \widetilde{T}(\alpha a)\rvert \leq \lvert \widetilde{T}(\alpha)\rvert$.
\end{corollary}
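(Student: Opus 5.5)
The plan is a direct injection argument that works for an arbitrary DFA, without Lemma~\ref{lemma:T-tilde-relation}. I will build a map $\varphi:\widetilde T(\alpha a)\to\widetilde T(\alpha)$ that sends each state to a suitable $a$-predecessor, and show it is injective. The case $\widetilde T(\alpha a)=\emptyset$ is trivial, so assume it is nonempty.

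\textbf{Step 1: the infimum of $u$ comes from the infimum of a predecessor.} Fix $u\in\widetilde T(\alpha a)$ and suppose first that $\alpha a\dashv\inf I_u$. By the definition of $I_u$, write $\inf I_u=\gamma a$ with $\gamma\in I_{u'}$ for some state $u'$ such that $\delta_a(u')=u$. The key observation is that $\gamma=\inf I_{u'}$.
\begin{itemize}
\item Every $\beta\in I_{u'}$ gives $\beta a\in I_u$. In particular $\inf I_{u'}\,a\in I_u$.
\item Minimality of $\inf I_u$ in $I_u$ then gives $\gamma a\preceq \inf I_{u'}\,a$.
\item By the co-lexicographic definition, two strings ending in the same character compare as their prefixes do. Hence $\gamma\preceq\inf I_{u'}$.
\item Since $\gamma\in I_{u'}$, also $\inf I_{u'}\preceq\gamma$, so equality holds.
\end{itemize}
It follows that $\alpha\dashv\gamma=\inf I_{u'}$, so $u'\in\widetilde T(\alpha)$. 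The case $\alpha a\dashv\sup I_u$ is symmetric: it yields a predecessor $u'$ with $\sup I_u=\sup I_{u'}\,a$. If both cases apply, fix one choice, say the infimum. Set $\varphi(u)=u'$.

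\textbf{Step 2: injectivity.} By construction $\delta_a(\varphi(u))=u$. Since $\delta$ is a function, $\varphi(u_1)=\varphi(u_2)$ implies $u_1=\delta_a(\varphi(u_1))=\delta_a(\varphi(u_2))=u_2$. So $\varphi$ is injective, and therefore $\lvert\widetilde T(\alpha a)\rvert\le\lvert\widetilde T(\alpha)\rvert$.

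\textbf{Expected obstacle.} The only delicate point is the identity $\gamma=\inf I_{u'}$ in Step 1. Naively, the predecessor used in the infimum walk of $u$ might have $\gamma$ as just some element of $I_{u'}$, not an extreme one. The cancellation of a common last character under $\preceq$ is what fixes this, and it is immediate from clause (ii) of the co-lexicographic order. I also need that $\inf I_{u'}$ exists, which is the cited Observation~8 of Kim et al.

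If one preferred to use the Wheeler structure instead, an alternative route is via Lemma~\ref{lemma:T-tilde-relation} and Remark~\ref{lemma:decreasing-matching}. However, the case $\lvert T(\alpha)\rvert=1$ would still require the argument of Step 1, so the direct injection is the cleaner choice.
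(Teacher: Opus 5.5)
Your proof is correct, but it follows a different route from the paper's.

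\textbf{The paper's route.} The paper first uses Lemma~\ref{lemma:T-tilde-relation} and Remark~\ref{lemma:decreasing-matching} to reduce everything to one residual case: $\lvert \widetilde T(\alpha a)\rvert = 1$ while $\widetilde T(\alpha)$ might be empty. In that case it appeals in a single line to ``the maximality of $\sup I_u$'' to obtain a state $v$ with $\alpha \dashv \sup I_v$. That step is exactly your Step~1 observation, stated without proof.

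\textbf{Your route.} You apply the Step~1 observation to every element of $\widetilde T(\alpha a)$. This shows the stronger inclusion $\widetilde T(\alpha a) \subseteq \{\delta_a(v) \mid v \in \widetilde T(\alpha)\}$, and determinism then gives the cardinality bound.

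\textbf{What each buys.} Your argument uses no Wheeler structure, so it proves the corollary for arbitrary DFAs, exactly as the statement is written. The paper's proof relies on Lemma~\ref{lemma:T-tilde-relation}, which is stated only for Wheeler DFAs, so strictly it covers only that class even though the corollary says ``Let $\mathcal D$ be DFA''. Your argument also makes explicit two facts the paper leaves implicit: the cancellation $\gamma a \preceq \inf I_{u'}\,a \Rightarrow \gamma \preceq \inf I_{u'}$, and the fact that the extremal string of $u$ is the extremal string of some $a$-predecessor followed by $a$. The paper's route is shorter because it reuses the identity $\widetilde T = T$ for nonempty $\widetilde T$. As you note, though, it cannot avoid your Step~1 argument in the residual case.
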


\begin{proof}
    By combining Lemma~\ref{lemma:T-tilde-relation} and Remark~\ref{lemma:decreasing-matching}, it follows that the only case that needs to be proved is that if $\lvert \widetilde T(\alpha a) \rvert = 1$, then $\lvert \widetilde T(\alpha) \rvert \geq 1$.
    Therefore, consider $\widetilde T(\alpha a) = \{u\}$, and suppose $\alpha a \dashv \sup I_u$, the other case $\alpha a \dashv \inf I_u$ is analogous.
    By the maximality of $\sup I_u$, it follows that there exists a state $v$, such that $\alpha \dashv \sup I_v$, and consequently it holds that $\widetilde T(\alpha) \neq \emptyset$.
\end{proof}

\subsection{The pattern matching algorithm}

We now present our novel indexing technique, which supports both \texttt{count} and \texttt{locate} queries for any string $\alpha \in \Sigma^*$.
Recall that, as previously discussed, computing the set $T(\alpha)$ suffices to answer both queries.
We now begin by introducing a preliminary notion that will be used extensively in this algorithm.
We recall that in the following $u_1, u_2, \ldots, u_n$ represents the ordered list of state with respect to the Wheeler order $\leq$.

\begin{definition}
    Let $\mathcal{D}=(Q,\Sigma,\delta,s)$ be a WDFA and let $\leq$ be its Wheeler order. We define the list $\mathcal{K}_{\mathcal D}$ as:
        \[
        \mathcal{K}_{\mathcal D} = \bigl( \inf I_{u_1}, \sup I_{u_1}, \inf I_{u_2}, \sup I_{u_2}, \dots, \inf I_{u_n}, \sup I_{u_n} \bigr)
        \]       
\end{definition}
We emphasize that, by Lemma~\ref{lemma:iff-sup-inf}, the strings in $\mathcal{K_D}$ are sorted co-lexicographically.
Consequently, the elements of $\mathcal{K_D}$ that are suffixed by the same string $\alpha$ constitute an interval in $\mathcal{K_D}$, which implies that also $\widetilde{T}(\alpha)$ forms an interval in $\mathcal{K_D}$.
The algorithm assumes access to an oracle that we term \emph{suffix match} and denote by \oracle.
Given a string $\alpha \in \Sigma^*$, the oracle $\oracle(\alpha)$ returns the maximal interval $]i, j[$, with $0\leq i < j \leq 2n+1$, such that: (i)~either $i=0$ or $\mathcal{K}_{\mathcal{D}}[i] \prec \alpha$, (ii)~either $j=2n+1$ or $\alpha \prec \mathcal{K_D}[j]$, and (iii)~$\alpha \dashv \mathcal{K}_{\mathcal{D}}[k]$ for every $k$ such that $i < k < j$. 

Note that, by definition of $\widetilde{T}(\alpha)$, once we obtain the interval $]i,j[$ output by $\oracle(\alpha)$ we can easily reconstruct the sequence $u_{i'}, u_{i'+1}, \ldots , u_{j'}$ corresponding to $\widetilde{T}(\alpha)$ as $i' = \lceil (i + 1)/2 \rceil$ and $j' = \lceil (j-1)/2 \rceil$.
Our algorithm consists of two main parts.
In the first part, we start by finding the longest prefix $\beta$ of $\alpha$ such that $\widetilde{T}(\beta) \neq \emptyset$.
This operation is done by performing a binary search over the length of the input string $\alpha$, where at every iteration we call our oracle $\oracle$ to compute the corresponding set $\widetilde{T}(\alpha)$.
Note that, we can find such string through a binary search, since by Corollary~\ref{cor:tilde-dec-matching} the cardinality of $\widetilde{T}$ is non-increasing when we consider longer prefixes of $\alpha$.
From Property~\ref{lemma:T-tilde-1} of Lemma~\ref{lemma:T-tilde-relation}, we know that since $\widetilde{T}(\beta)$ is not empty we have that $\widetilde{T}(\beta) = T(\beta)$.
Consequently, if $\alpha = \beta$, then we can directly use $\widetilde{T}(\beta)$ to compute \texttt{count}($\alpha$) and \texttt{locate}($\alpha$) and we are done.
Otherwise, let $a \in \Sigma$ and $\gamma \in \Sigma^*$ be respectively the character and the string satisfying $\alpha = \beta a \gamma$.
By definition of $\beta$, we have that $\widetilde{T}(\beta a)=\emptyset$, and it follows from Property~\ref{lemma:T-tilde-2} of Lemma~\ref{lemma:T-tilde-relation} that $\lvert T(\beta a)\rvert \leq 1$. 
Consequently, in this case we identify the unique state $u_j$ that may belong to $T(\beta a)$.
We will show that such a state $u_j$ can be identified through a call to our oracle $\oracle(\beta a)$.
Next, we check if there exists a transition labeled $a$, leaving a state $v \in \widetilde{T}(\beta)$.
In particular, we will observe that $T(\beta a) = \{u_j \}$ holds if and only if such a transition exists, which must necessarily enter into $u_j$.
In this case, to complete the algorithm execution it is sufficient to understand if there exists an integer $k$, with $1 \leq k \leq n$, such that $u_k = \delta(u_j, \gamma)$.
If these conditions are not satisfied we will conclude that $\texttt{count}(\alpha) = 0$ and $\texttt{locate}(\alpha) = \emptyset$.

\paragraph{Algorithm execution.}
We now present a formal description of the algorithm, a pseudocode formulation is presented in Algorithm~\ref{alg:computeT}.
The inputs of our algorithm are the input string $\alpha \in \Sigma^*$, $\mathcal{D}= (Q, \Sigma, \delta, s)$ the Wheeler DFA  and the oracle $\oracle$.
The algorithm starts by performing a binary search over the length of $\alpha$ to find the longest prefix $\beta$ of $\alpha$ such that $\widetilde{T}(\beta)$ is not empty.
To do so, we initialize the endpoints of the binary search $[hi,lo]$ to $[1,\lvert \alpha \rvert]$, and set the prefix $\beta$ to the empty string $\varepsilon$.
At each iteration of the binary search, we define $\beta'$ as the prefix of $\alpha$ of length $mid=lo + \lfloor (hi-lo)/2 \rfloor$, and we compute the open interval $]i,j[$ by calling the oracle $\oracle(\beta')$.
If $j > i+1$, the interval contains at least one value, implying that $\widetilde{T}(\beta') \neq \emptyset$ due to the oracle definition.
Therefore, in this case, we update the endpoint $lo$ to $mid+1$ to search for a longer prefix satisfying the desired property, and we update $\beta$ to $\beta'$ since $\lvert \beta \rvert < \lvert \beta'\rvert$.
Otherwise, if the interval is empty (i.e., $j=i+1$), then $\widetilde{T}(\beta')$ is also empty. It follows from Corollary~\ref{cor:tilde-dec-matching} that any prefix $\beta''$ longer than $\beta'$ must necessarily satisfy the condition $\widetilde{T}(\beta'') = \emptyset$.
Therefore, in this second case we update the endpoint $hi$ to $mid-1$, to search if a shorter prefix satisfies the desired property.
As a consequence of that, upon termination, $\beta$ is guaranteed to be the longest prefix of $\alpha$ such that $\widetilde{T}(\beta)\neq\emptyset$. 
Once $\beta$ is found, we compute the open interval $]i, j[$ returned by the oracle $\oracle(\beta)$.
We then determine the indices relative to the Wheeler order $\leq$ by updating the values to $i = \lceil(i+1)/2\rceil$ and $j = \lceil(j-1)/2\rceil$, such that $\widetilde{T}(\beta) = \{u_i, u_{i+1}, \dots, u_j\}$.
If $\alpha = \beta$, we return this set, since $T(\alpha)=\widetilde{T}(\beta)$.
Otherwise, let $a$ be the character succeeding $\beta$ in $\alpha$.
We query the oracle $\oracle(\beta a)$ to obtain its corresponding interval $]i', j'[$.
Since $\widetilde{T}(\beta a) = \emptyset$, it follows that $i'+1 = j'$, since $i'+1 < j$ would imply the existence of a state $u$ such that either $\beta a \dashv \inf I_u$ or $\beta a \dashv \sup I_u$.
We then update the indices to $i' = \lceil i'/2\rceil$ and $j' = \lceil j'/2\rceil$.
If $i' + 1 = j'$, it follows that $\beta a$ does not reach any state in the Wheeler DFA, since in this case $\sup I_{u_{i'}} \prec \beta a \prec \inf I_{u_{j'}}$ and therefore $\beta a$ cannot suffix any string in $I_u$, for every $u \in Q$.
Therefore, in this case we conclude that $T(\beta a )$ is empty and thus we return $\emptyset$.
Otherwise, if $i' = j'$, then we know that for the state $u_{i'}$ it holds that $\inf I_{u_{i'}} \prec \beta a \prec \sup I_{u_{i'}}$.
Therefore, in this case $u_{i'}$ is the only candidate state that can belong to $T(\beta a)$, thus we want to check if there exists a string $\alpha' \in I_{u_{i'}}$ such that $\beta a \dashv \alpha'$.
Consider, the sequence $u_i, u_{i+1}, \ldots, u_{j}$ forming $T(\beta)$.
To do so, we check if there exists $k \in [i, j]$ such that $\delta(u_k, a) \neq\; \perp$.
This verifies whether there exists a transition labeled $a$ leaving any state in $T(\beta)$.
If such an integer $k$ does not exist, then since $u_i, u_{i+1}, \ldots, u_{j}$ are those and only states reached by $\beta$, we conclude that $T(\beta a)$ is empty and so we return $\emptyset$.
On the other hand, if such an integer $k$ exists, then the state reached by this transition $\delta(u_k,a)$ belongs to $T(\beta a)$ by construction, and therefore since $u_{i'}$ is the only candidate that can be part of $T(\beta a)$, in this case we conclude that $\delta(u_k, a) = u_{i'}$.
Therefore, to terminate the algorithm we initialize $\gamma$, such that $\alpha = \beta a \gamma$, and we compute $w = \delta(u_{i'}, \gamma)$.
Since $\alpha = \beta a \gamma$, it follows that in this latter case that if $w =\; \perp$, then $T(\alpha) = \emptyset$ and so we return the empty set.
Otherwise, if $w \in Q$, due to an analogous reasoning, we return the set $T(\alpha) = \{ w\}$ and we are done.

\paragraph{Computing $\mathbf{\widetilde{T}}$.}

In this section, we have seen that the binary search shown in Line~\ref{alg:computeT function locate states} of Algorithm~\ref{alg:computeT} returns the longest prefix $\beta$ of $\alpha$ for which $\widetilde{T}(\beta) \neq \emptyset$ holds.
Therefore, by combining this result with Lemma~\ref{lemma:T-tilde-relation} and Corollary~\ref{cor:tilde-dec-matching}, it follows that we can also compute $T(\alpha)$, if $\lvert T(\alpha) \rvert \geq 2$, by just calling the oracle $\oracle$ once over the string $\alpha$ without having to navigate the input DFA.
To achieve this, once we have called $\oracle(\alpha)$, we return the set $\widetilde T(\alpha)$ if $\lvert \widetilde T(\alpha) \rvert \geq 2$, and $\emptyset$ otherwise.
See Algorithm~\ref{alg:computeT-tilde} for a pseudocode. 

\begin{algorithm}[!ht]

\caption{\texttt{ComputeT}($\alpha$)}\label{alg:computeT}
\Input{A WDFA $\mathcal D = (Q,\Sigma,\delta,s)$, a string $\alpha \in \Sigma^*$, the oracle $\oracle$}
\Output{The set $T(\alpha)$}

\medskip

\Fn{\LocateStates{$\alpha, \oracle$}}{\label{alg:computeT function locate states}

$lo \gets 1,\;hi \gets \lvert \alpha \rvert$\tcp*{Extremes of the binary search}
$\beta \gets \varepsilon$\tcp*[r]{Initialise $\beta$ as the empty string}

\While(\tcp*[f]{Compute longest prefix $\beta$ with $\widetilde{T}(\beta) \neq \emptyset$}){$lo \leq hi$}{

    \smallskip

    $mid \gets lo+\lfloor (hi-lo)/2 \rfloor$\;
    $\beta' \gets \alpha[1,mid]$\tcp*[r]{Candidate to update $\beta$}
    $i,j\gets \oracle(\beta')$\;
    
    \If(\tcp*[f]{If $\widetilde{T}(\beta') \neq \emptyset$}){$j > i+1$}{
        $lo \gets mid+1$, $\,\beta \gets \beta'$ \tcp*[r]{update lowest extreme and $\beta$}
    }\lElse(\tcp*[f]{Otherwise update the highest extreme}){$hi\gets mid-1$}
}

$i,j\gets \oracle(\beta)$\;
$i\gets \lceil (i+1)/2 \rceil,\ j\gets \lceil (j-1)/2 \rceil$\tcp*[r]{$T(\beta) = \{u_i, u_{i+1}, \ldots , u_j\}$}

\smallskip

\Return $i,j, \beta$\;\label{alg:computeT function locate states end}
}

\medskip

$i,j,\beta \gets $\LocateStates{$\alpha, \oracle$}\;

\smallskip

\lIf(\tcp*[f]{If $T(\alpha) = T(\beta)$}){$\beta=\alpha$}{\Return $\{u_{i}, u_{i+1}, \ldots, u_{j}\}$}

\smallskip

$a \gets \alpha[\lvert \beta\rvert+1]$\;
$i',j' \gets \oracle(\beta a)$\;
$i'\gets \lceil i'/2 \rceil,\ j'\gets \lceil j'/2 \rceil$\tcp*[r]{Either $T(\beta a) = \emptyset $ or $T(\beta a) = \{u_{i'}\}$}

\medskip

\lIf(\tcp*[f]{Case 1) $T(\beta a) = \emptyset$}){$i' < j'$}{\Return $\emptyset$}

\smallskip

\If(\tcp*[f]{Case 2) $T(\beta a) = \{u_{i'} \}$}){$\exists k \in [i,j]$ s.t. $\delta(u_k,a) \neq\; \perp$}{\label{alg: alg:computeT second step}

    \smallskip

        $\gamma \gets \alpha[\lvert\beta\rvert + 2,  \lvert\alpha\rvert ]$\tcp*{$\alpha = \beta a \gamma$}\label{alg:computeT navigating start}
        $w\gets \delta(u_{i'},\gamma)$ \tcp*[r]{compute state $w$ reached by $\alpha$}

        \smallskip
        
        \lIf{$w\neq\; \perp$}{\Return $\{w\}$}\label{alg:computeT navigating end}
}

\Return $\emptyset$
\end{algorithm}

\begin{algorithm}[!ht]

\caption{\texttt{Compute$\widetilde{T}$}($\alpha$)}\label{alg:computeT-tilde}
\Input{A WDFA $\mathcal D = (Q,\Sigma,\delta,s)$, a string $\alpha \in \Sigma^*$, and the oracle $\oracle$}
\Output{The set $T(\alpha)$, if $\lvert T(\alpha) \rvert \geq 2$ and $\emptyset$ otherwise}

\smallskip

$i,j\gets \oracle(\alpha)$\;

\smallskip

$i \gets \lceil (i+1) / 2 \rceil, j \gets \lceil (j-1) / 2\rceil$\;

\smallskip

\If(\tcp*[f]{If $\lvert T(\alpha)\rvert \geq 2$}){$i < j$}{
    \Return $\{u_i, u_{i+1}, \ldots, u_j \}$\tcp*[r]{Return $T(\alpha) = \{u_i, u_{i+1}, \ldots , u_j\}$}
}
\Return $\emptyset$
    
\end{algorithm}

\section{Time and space complexity}\label{sec:data-strucure}

In Section~\ref{sec:algorithm}, we presented our new indexing strategy and we showed its correctness.
Thus, to complete the proofs of Lemma~\ref{lemma: longest prefix}, Theorem~\ref{thm: main graph suffix array}, and Corollary~\ref{cor: GrSA for patterns occurring twice} we now analyze the space and time complexities of our novel pattern matching algorithms.
Specifically, we partition this analysis into three different parts, each corresponding to one of the three substrings into which the pattern is split (i.e., $\alpha = \beta a \gamma$).

\paragraph{Part 1, the Oracle.}
We present an implementation of the oracle $\oracle$ which, given a string $\beta' \in \Sigma^*$, returns the endpoints of the maximal interval $]i,j[$ defined in Section~\ref{sec:algorithm}.
Our approach requires storing the \emph{infimum and supremum pseudoforests} of the automaton $\mathcal D$ as defined by Kim et al.~\cite{suff-double}.
In each of these pseudoforests, every node has exactly one incoming edge; following the unique infinite walk entering a node yields its corresponding infimum (or supremum) string.
Figure~\ref{fig: example} provides a graphical representation of these pseudoforests.
At every call of the oracle $\oracle(\beta')$, for a string $\beta' \in \Sigma^*$, we perform a binary search over the interval $[1, 2n]$ to identify the greatest value $i$ such that $\mathcal{K_D}[i] \prec \beta'$, if this value does not exist we set $i=0$. 
At each step of the binary search, we need to co-lexicographically compare $\beta'$ with an entry in $\mathcal{K_D}$, that is either an infimum or a supremum string depending on the index.
Each comparison requires reconstructing the corresponding infimum (supremum) string by traversing the infimum (supremum) pseudoforest in a backward fashion, i.e., towards the tree root, starting from the corresponding node.
Once we have retrieved this integer $i$, we perform a second binary search in the interval $[i+1, 2n]$ to identify the smallest value $j$ such that $\beta' \prec \mathcal{K_D}[j] \wedge \neg (\beta' \dashv \mathcal{K_D}[j])$ by using the same strategy.

To traverse these pseudoforests in a cache-friendly manner, we extend the standard Heavy-Light Decomposition (HLD)~\cite{HLD} from rooted trees to pseudotrees, and consequently, to pseudoforests.
In HLD, every internal node $u$ has a unique \emph{heavy} edge $(u,v)$, where $v$ is the child of $u$ that has the largest subtree.
The heavy paths, formed by grouping heavy edges together, are stored in contiguous memory locations.
All the other edges from $u$ to its children are designated as \emph{light} edges.
This ensures that any node-to-root path encounters at most $\mathcal O( \log n)$ light edges.
To extend this logic, we partition a pseudotree into its unique cycle and a set of disjoint subtrees. 
The cycle acts as a circular root; each node in the cycle serves as the root of the subtree containing all the nodes reachable via paths that do not traverse cycle edges. 
We linearize the cycle and apply HLD to each subtree independently, where the root of each subtree has a pointer to its position in the linearized cycle. 
Since following heavy edges cause $\mathcal O(\frac{m}{B})$ cache misses, and we can encounter at most $\mathcal O( \log n)$ light edges, we deduce that the total number of cache misses for this operation is $\mathcal O(\frac{m}{B} + \log n)$.
Therefore, the overall complexity of a single call to the oracle can cause at most $\mathcal O((\frac{m}{B} + \log n) \log n)$ cache misses, since $\oracle$ executes this operation for every iteration of the binary search over the interval $[1,2n]$.
This completes the proof of Corollary~\ref{cor: GrSA for patterns occurring twice}, because in Algorithm~\ref{alg:computeT-tilde} we call $\oracle$ only once and both HLD require $\mathcal O(n)$ RAM words of space to be stored (every pseudoforest has $n$ nodes and $n$ edges).
Consequently, since Algorithm~\ref{alg:computeT} calls the oracle $\oracle$ at most $\mathcal O(\log m)$ times, the total number of cache misses from Line~\ref{alg:computeT function locate states} to Line~\ref{alg:computeT function locate states end} is bounded by $\mathcal O((\frac{m}{B} + \log n)\log n \log m)$, which consequently completes the proof of Lemma~\ref{lemma: longest prefix}.

\paragraph{Part 2, the outgoing transition.} 
In the second phase of Algorithm~\ref{alg:computeT}, corresponding to Line~\ref{alg: alg:computeT second step}, we have to determine whether there exists a transition labeled $a \in \Sigma$ leaving a state within the range $[u_i, u_j]$, where $\{u_i, u_{i+1}, \dots, u_j\} = T(\beta)$.
To this end, we utilize the Wheeler NFA representation proposed by Gagie et al.~\cite{Wheeler-graphs}, which stores outgoing transitions via two sequences: $\mathsf{OUT}$, containing edge labels ordered by the Wheeler rank of their source states, and $\mathsf{OUT\_DEG}$, a bitvector encoding the out-degrees in unary (see Figure~\ref{fig: example}). 
By augmenting the bitvector $\mathsf{OUT\_DEG}$ with auxiliary data structures, we can support constant-time rank~\cite{rank-const} and select~\cite{select-const} operations in $n+\lvert\delta\rvert + o(n+\lvert\delta\rvert)$ bits.
Moreover, by representing $\mathsf{OUT}$ with alphabet partitioning~\cite[Theorem 3.2]{alphabet-part}, we can encode the sequence using $\mathcal{O} (\lvert\delta\rvert \log \sigma)$ bits, while supporting rank operations in $\mathcal{O} (\log \log \sigma)$.
Since we are assuming that the alphabet is effective, and since Wheeler DFAs are input-consistent, we conclude that $\sigma \leq n$.
Consequently, the I/O complexity of this operation is dominated by the complexity of the previous part.
As shown by Gagie et al.~\cite{Wheeler-graphs}, these operations are sufficient to verify the existence of a transition labeled $a$ from the set $T(\beta)$.

\paragraph{Part 3, navigating the automaton.} 
The third component, which is used to check the final segment $\gamma$, is a representation of the original DFA. 
In this third part, to decrease the number of cache misses, we aim to exploit the \emph{sparseness} of the input automaton $\mathcal D$.
Indeed, in applications like bioinformatics, researchers showed that pangenome graphs may be particularly sparse and linear, which means that most of the nodes have just one outgoing edge~\cite{pang-unary}.
Therefore, to achieve a cache-friendly solution, we employ a \emph{path-compression} over the input DFA.
Path compression collapses sequences of states having exactly one outgoing edge into a contiguous location of the memory.
We call each of these collapsed sequences a \emph{unary path}, whose traversal can be executed in a cache-friendly manner through a sequential scan.
Formally, we store contiguously in memory the maximal sequences of states $\hat u_1, \hat u_2, \ldots , \hat u_p$ satisfying: (i)~the states $\hat u_1, \hat u_2, \ldots \hat u_{p-1}$ have out-degree equal to $1$, (ii)~the states $\hat u_2, \hat u_3, \ldots \hat u_{p}$ have in-degree equal to $1$, and (iii)~for every $i \in [p-1]$, it holds that $\hat u_{i+1} \in \delta_a(\hat u_i)$, for some $a \in \Sigma$.
Figure~\ref{fig: example} shows all the maximal unary paths for a given DFA $\mathcal D$.

To navigate between unary paths, we use the following data structure.
Let $v$ be the last state of an arbitrary unary path, and let $out_v$ be its out-degree.
We use a minimal perfect hash function $h_v : \Sigma \rightarrow [out_v]$ and a hash table $H$ of size $out_v$, storing at position $h_v(a)$ the pointer to the state $z$ such that $z \in \delta_a(v)$ (if any).
By using the minimal perfect hash function described in the book of Ferragina~\cite[Theorem 8.8]{PearlsOfAlgorithms}, we can evaluate the functions $h_v$ in $O(1)$, therefore we have $\mathcal O(d)$ cache misses in the worst-case required to move from a unary path to another, where $d$ is the number of unary paths we encounter while traversing $\mathcal D$ using the string $\gamma$.
Moreover, since each unary path is stored in contiguous positions of the memory, traversing the unary paths can trigger at most $\mathcal O (\frac{\lvert \gamma\rvert }{B})$ cache misses, which becomes $\mathcal O(\frac{m}{B})$ since $\lvert \gamma \rvert < m$.
Therefore, this data structure allows us to execute Algorithm~\ref{alg:computeT} from Line~\ref{alg:computeT navigating start} to Line~\ref{alg:computeT navigating end} triggering at most $\mathcal O(d + \frac{m}{B})$ cache misses.
Despite the fact that the parameter $d$ is $\Theta(m)$ in the worst-case on arbitrary DFAs, in Section~\ref{sec: experiments} we experimentally observe that on pangenome graphs, due to their linear structure, is significantly smaller.

Concerning the space complexity, the minimal perfect hash function $h_v$ described by Ferragina uses $\mathcal O(\lvert out_v \rvert)$ RAM words~\cite[Theorem 8.8]{PearlsOfAlgorithms}.
Thus, the total space complexity becomes $\mathcal O(\lvert \mathcal D \rvert)$ RAM words, where $\lvert \mathcal D \rvert$ is the input dimension (number of transitions plus number of states), when we account for all the states, as well as the other components of $\mathcal D$.
Therefore, by accounting the three parts, we obtain an overall number of cache misses given by $\mathcal O((\frac{m}{B} + \log n)\log n \log m + d)$, while the space complexity is $\mathcal O(\lvert \mathcal D \rvert)$ RAM words, which completes the proof of Theorem~\ref{thm: main graph suffix array}.

\begin{figure}

\begin{center}
\resizebox{0.9\textwidth}{!}{
\begin{tikzpicture}[
    circledim/.style={minimum size=1.6em, font={\scriptsize}, line width = 0.65pt, inner sep=0.15em},
    rectnodes/.style={shape=rectangle, inner sep=0.3em, draw=black, line width = 0.15pt},
    values/.style={font={\footnotesize}, anchor = center, inner sep=0.12em},
    label/.style={font={\footnotesize}, anchor = south, inner sep=0.10em, minimum height = 1em},
    ghost/.style={text centered, minimum size=0, inner sep=0},
    >={Stealth[length=1.6mm, width=1.2mm]},
    scale = 0.80,
    node distance = 0.05
]
    \tikzmath{
        \xdist = 1.7; 
        \ydist = 0.9; 
        \ydistrect = 0.8; 
        \xdistrect = 1.7; 
        \xinrectdist = 0.6; 
    }

    \node[ghost] (.) at (-1.25*\xdist,0) {$(a)$};

    \node[state,circledim] (0) at (0,1*\ydist) {1};
    \node[state,circledim] (1) at (0,0) {2};
    \node[state,circledim] (10) at (1*\xdist,0) {11};
    \node[state,circledim] (14) at (2*\xdist,0) {15};
    \node[state,circledim] (8) at (3*\xdist,0) {9};
    \node[state,circledim] (3) at (4*\xdist,0) {4};
    \node[state,circledim] (11) at (5*\xdist,0) {12};
    \node[state,circledim] (7) at (6*\xdist,0) {8};
    \node[state,circledim] (2) at (7*\xdist,0) {3};
    \node[state,circledim] (6) at (8*\xdist,0) {7};

    \node[state,circledim] (15) at (3*\xdist,1*\ydist) {16};
    \node[state,circledim] (9) at (4*\xdist,1*\ydist) {10};

    \node[state,circledim] (13) at (7*\xdist,1*\ydist) {14};
    \node[state,circledim] (4) at (8*\xdist,1*\ydist) {5};
    \node[state,circledim] (12) at (9*\xdist,1*\ydist) {13};

    \node[state,circledim] (5) at (3*\xdist,-1*\ydist) {6};

    \draw[->, loop right] (0) to node[right] {\footnotesize$\#$} (0);

    \draw[->, red, bend right = 70] (0.west) to node[left] {$a$} (1.west);
    \draw[->, darkgreen] (1) to node[above] {$c$} (10);
    \draw[->, darkgreen] (10) to node[above] {$c$} (14);
    \draw[->, blue] (14) to node[above] {$b$} (8);
    \draw[->, red] (8) to node[above] {$a$} (3);
    \draw[->, darkgreen] (3) to node[above] {$c$} (11);
    \draw[->, blue] (11) to node[above] {$b$} (7);
    \draw[->, red] (7) to node[above] {$a$} (2);
    \draw[->, blue] (2) to node[above] {$b$} (6);

    \draw[->, darkgreen, bend left = 20] (14) to node[above] {$c$} (15);
    \draw[->, darkgreen, bend left = 20] (7) to node[above] {$c$} (13);
    \draw[->, red, bend right = 15] (14) to node[above] {$a$} (5);

    \draw[->, red] (5) to node[above] {$a$} (1*\xdist,-1*\ydist) to [bend left = 21] (1);

    \draw[->, blue, bend left = 29] (6) to (7*\xdist,-1.15*\ydist) to [bend left = 29] (7);
    \node[blue] (!) at (7*\xdist,-1.15*\ydist+0.3) {\small $b$};

    \draw[->, blue] (5) to node[above] {$b$} (5*\xdist,-1*\ydist) to [bend right = 25] (7);

    \draw[->, blue] (9) to node[above] {$b$} (5*\xdist,1*\ydist) to [bend left = 25] (7);

    \draw[->, blue] (15) to node[above] {$b$} (9);

    \draw[->, red, bend left = 35] (6) to node[below] {$a$} (2);

    \draw[->, red] (13) to node[above] {$a$} (4);
    \draw[->, darkgreen] (4) to node[above] {$c$} (12);

    \draw[->, darkgreen] (12.north) to [bend right = 30](8.7*\xdist,1.7*\ydist) to node[below] {$c$}(3*\xdist,1.7*\ydist) to [bend right = 37] (14);

\begin{scope}[yshift=-2.4cm]
    \node[ghost] (.) at (-1.25*\xdist,0) {$(b)$};

    \node[ghost] (0) at (4.15*\xdist,0) {
        \setlength{\tabcolsep}{0.275em}
        \begin{tabular}{m{4em} m{0.4em} m{0.15em} m{0.15em} m{0.15em}m{0.15em} m{0.15em} m{0.15em} m{0.15em} m{0.15em} m{0.15em} m{0.15em} m{0.15em} m{0.15em}m{0.15em} m{0.15em} m{0.15em} m{0.15em} m{0.15em} m{0.15em} m{0.15em} m{0.15em} m{0.15em}m{0.15em} m{0.15em} m{0.15em} m{0.15em} m{0.15em} m{0.15em} m{0.15em} m{0.15em} m{0.15em} m{0.15em} m{0.15em} m{0.15em} m{0.15em} m{0.15em} m{0.15em} m{0.15em} m{0.15em} m{0.15em} m{0.15em}m{0.15em} m{0.15em} m{0.15em} m{0.15em} m{0.15em} m{0.15em} m{0.15em} m{0.15em} m{0.15em} m{0.15em} m{0.15em} }

        {\footnotesize $\mathsf{OUT\_DEG}$} & $=$ & $0$ & $0$ & $1$ & $0$ & $1$ & $0$ & $1$ & $0$ & $1$ & $0$ & $1$ & $0$ & $0$ & $1$ & $0$ & $0$ & $1$ & $0$ & $0$ & $1$ & $0$ & $1$ & $0$ & $1$ & $0$ & $1$ & $0$ & $1$ & $0$ & $1$ & $0$ & $1$ & $0$ & $0$ & $0$ & $1$ & $0$ & $1$ \\ [3pt]
        {\footnotesize $\mathsf{OUT}$} & $=$ & {\notsotiny$\#$} & $a$ & & $c$ & & $b$ & & $c$ & & $c$ & & $a$ & $b$ & & $a$ & $b$ & & $a$ & $c$ & & $a$ & & $b$ & & $c$ & & $b$ & & $c$ & & $a$ & & $a$ & $b$ & $c$ & & $b$ & \\
\end{tabular}
    };
\end{scope}

\begin{scope}[yshift=-4.9cm]
    \node[ghost] (.) at (-1.25*\xdist,0) {$(c)$};

    \node[state,circledim] (0) at (0,1*\ydist) {1};
    \node[state,circledim] (1) at (0,0) {2};
    \node[state,circledim] (10) at (1*\xdist,0) {11};
    \node[state,circledim] (14) at (2*\xdist,0) {15};
    \node[state,circledim] (8) at (3*\xdist,0) {9};
    \node[state,circledim] (3) at (4*\xdist,0) {4};
    \node[state,circledim] (11) at (5*\xdist,0) {12};
    \node[state,circledim] (7) at (6*\xdist,0) {8};
    \node[state,circledim] (2) at (7*\xdist,0) {3};
    \node[state,circledim] (6) at (8*\xdist,0) {7};

    \node[state,circledim] (15) at (3*\xdist,1*\ydist) {16};
    \node[state,circledim] (9) at (4*\xdist,1*\ydist) {10};

    \node[state,circledim] (13) at (7*\xdist,1*\ydist) {14};
    \node[state,circledim] (4) at (8*\xdist,1*\ydist) {5};
    \node[state,circledim] (12) at (9*\xdist,1*\ydist) {13};

    \node[state,circledim] (5) at (3*\xdist,-1*\ydist) {6};

    \draw[->, loop right] (0) to node[right] {\footnotesize$\#$} (0);

    \draw[->, red, bend right = 70] (0.west) to node[left] {$a$} (1.west);
    \draw[->, darkgreen] (1) to node[above] {$c$} (10);
    \draw[->, darkgreen] (10) to node[above] {$c$} (14);
    \draw[->, blue] (14) to node[above] {$b$} (8);
    \draw[->, red] (8) to node[above] {$a$} (3);
    \draw[->, darkgreen] (3) to node[above] {$c$} (11);
    \draw[->, blue] (2) to node[above] {$b$} (6);

    \draw[->, darkgreen, bend left = 20] (14) to node[above] {$c$} (15);
    \draw[->, darkgreen, bend left = 20] (7) to node[above] {$c$} (13);
    \draw[->, red, bend right = 15] (14) to node[above] {$a$} (5);

    \draw[->, blue] (5) to node[above] {$b$} (5*\xdist,-1*\ydist) to [bend right = 25] (7);

    \draw[->, blue] (15) to node[above] {$b$} (9);

    \draw[->, red, bend left = 35] (6) to node[below] {$a$} (2);

    \draw[->, red] (13) to node[above] {$a$} (4);
    \draw[->, darkgreen] (4) to node[above] {$c$} (12);

\end{scope}
\begin{scope}[yshift=-8cm]

    \node[ghost] (.) at (-1.25*\xdist,0) {$(d)$};

    \node[state,circledim] (0) at (0,1*\ydist) {1};
    \node[state,circledim] (1) at (0,0) {2};
    \node[state,circledim] (10) at (1*\xdist,0) {11};
    \node[state,circledim] (14) at (2*\xdist,0) {15};
    \node[state,circledim] (8) at (3*\xdist,0) {9};
    \node[state,circledim] (3) at (4*\xdist,0) {4};
    \node[state,circledim] (11) at (5*\xdist,0) {12};
    \node[state,circledim] (7) at (6*\xdist,0) {8};
    \node[state,circledim] (2) at (7*\xdist,0) {3};
    \node[state,circledim] (6) at (8*\xdist,0) {7};

    \node[state,circledim] (15) at (3*\xdist,1*\ydist) {16};
    \node[state,circledim] (9) at (4*\xdist,1*\ydist) {10};

    \node[state,circledim] (13) at (7*\xdist,1*\ydist) {14};
    \node[state,circledim] (4) at (8*\xdist,1*\ydist) {5};
    \node[state,circledim] (12) at (9*\xdist,1*\ydist) {13};

    \node[state,circledim] (5) at (3*\xdist,-1*\ydist) {6};

    \draw[->, loop right] (0) to node[right] {\footnotesize$\#$} (0);

    \draw[->, darkgreen] (1) to node[above] {$c$} (10);
    \draw[->, blue] (14) to node[above] {$b$} (8);
    \draw[->, red] (8) to node[above] {$a$} (3);
    \draw[->, darkgreen] (3) to node[above] {$c$} (11);
    \draw[->, blue] (11) to node[above] {$b$} (7);
    \draw[->, red] (7) to node[above] {$a$} (2);
    \draw[->, blue] (2) to node[above] {$b$} (6);

    \draw[->, darkgreen, bend left = 20] (14) to node[above] {$c$} (15);
    \draw[->, darkgreen, bend left = 20] (7) to node[above] {$c$} (13);
    \draw[->, red, bend right = 15] (14) to node[above] {$a$} (5);

    \draw[->, red] (5) to node[above] {$a$} (1*\xdist,-1*\ydist) to [bend left = 21] (1);

    \draw[->, blue] (15) to node[above] {$b$} (9);

    \draw[->, red] (13) to node[above] {$a$} (4);
    \draw[->, darkgreen] (4) to node[above] {$c$} (12);

    \draw[->, darkgreen] (12.north) to [bend right = 30](8.7*\xdist,1.7*\ydist) to node[below] {$c$}(3*\xdist,1.7*\ydist) to [bend right = 37] (14);
    
\end{scope}
\begin{scope}[yshift=-11.5cm]

    \node[ghost] (.) at (-1.25*\xdist,0) {$(e)$};

    \node[values] (1) at (0,0) {2};
    \node[label, red, below=-0.06em of 1.south] (1label) {$a$};
    \node[values, right=\xinrectdist of 1] (10) {11};
    \node[label, darkgreen, below=-0.06em of 10.south] (10label) {$c$};

    \node[rectnodes, fit={(1) (1label) (10) (10label)}] (1rect) {};

    \node[values, above=\ydistrect of 1] (0) {1};
    \node[label, below=-0.06em of 0.south] (0label) {\scriptsize $\#$};
    \node[rectnodes, fit={(0) (0label)}] (0rect) {};
    \draw[->, loop right] (0rect) to node[right] {\footnotesize$\#$} (0rect);

    \draw[->, red] (0rect.south) to node[right] {$a$} (0rect|-1rect.north);

    \node[values, right=\xdistrect of 10] (14) {15};
    \node[label, anchor = south, darkgreen, below=-0.06em of 14] (14label) {$c$};
    \node[rectnodes, fit={(14) (14label)}] (14rect) {};
    \draw[->, darkgreen] (1rect) to node[above] {$c$} (14rect);

    \node[values, right=\xdistrect of 14] (8) {9};
    \node[label, blue, below=-0.06em of 8] (8label) {$b$};
    \node[values, right=\xinrectdist of 8] (3) {4};
    \node[label, red, below=-0.06em of 3] (3label) {$a$};
    \node[values, right=\xinrectdist of 3] (11) {12};
    \node[label, darkgreen, below=-0.06em of 11] (11label) {$c$};

    \node[rectnodes, fit={(8) (8label) (3) (3label) (11) (11label)}] (8rect) {};
    \draw[->, blue] (14rect) to node[above] {$b$} (8rect);

    \node[values, above=\ydistrect of 8] (15) {16};
    \node[label, darkgreen, below=-0.06em of 15.south] (15label) {$c$};
    \node[values, right=\xinrectdist of 15] (9) {10};
    \node[label, blue, below=-0.06em of 9.south] (9label) {$b$};
    
    \node[rectnodes, fit={(15) (15label) (9) (9label)}] (15rect) {};

    \draw[->, darkgreen, bend left = 25] (14rect.north) to node[above] {$c$} (15rect.west);

    \node[values, below=1*\ydistrect of 8] (5) {6};
    \node[label, red, below=-0.06em of 5.south] (5label) {$a$};
    \node[rectnodes, fit={(5) (5label)}] (5rect) {};
    \draw[->, red, bend right = 20] (14rect.south) to node[above] {$a$} ($(5rect.west)+(0,0.2)$);
    
    \draw[->, red] ($(5rect.west)+(0,-0.2)$) to ++(-\xdistrect,0) to [bend left = 15] node[above] {$a$}(1rect.south);
    
    \node[values, right=\xdistrect of 11] (7) {8};
    \node[label, blue, below=-0.06em of 7] (7label) {$b$};
    \node[rectnodes, fit={(7) (7label)}] (7rect) {};

    \draw[->, blue] (8rect.east) to node[above] {$b$} (7rect.west);
    \draw[->, blue, bend left = 15] (15rect.east) to node[above] {$b$} (7rect.north);
    \draw[->, blue] (5rect.east) to ++(\xdistrect,0) to [bend right = 20] node[above] {$b$} (7rect.south);

    \node[values, right=\xdistrect of 7] (2) {3};
    \node[label, red, below=-0.06em of 2] (2label) {$a$};
    \node[values, right=\xinrectdist of 2] (6) {7};
    \node[label, blue, below=-0.06em of 6] (6label) {$b$};
    \node[rectnodes, fit={(2) (2label) (6) (6label)}] (2rect) {};

    \draw[->, red] (7rect.east) to node[above] {$a$} (2rect.west);
    \draw[->, red, loop below] (2rect) to node[below] {$a$} (2rect);

    \node[values, above=\ydistrect of 2] (13) {14};
    \node[label, darkgreen, below=-0.06em of 13] (13label) {$c$};
    \node[values, right=\xinrectdist of 13] (4) {5};
    \node[label, red, below=-0.06em of 4] (4label) {$a$};
    \node[values, right=\xinrectdist of 4] (12) {13};
    \node[label, darkgreen, below=-0.06em of 12] (12label) {$c$};
    \node[rectnodes, fit={(13) (13label) (4) (4label) (12) (12label)}] (13rect) {};

    \draw[->, darkgreen, bend left = 30] (7rect) to node[above] {$c$} (13rect.west);
    
    \draw[->, blue, bend left = 45] ($(2rect.west)+(0,-0.25)$) to node[below] {$b$} ($(7rect.east)+(0,-0.25)$);

    \draw[->, darkgreen] (13rect.north) to [bend right = 15] ($(13rect.north)+(-1,0.3)$) to node[below] {$c$} ($(15rect.north)+(-0.5*\xdistrect,0.3)$) to [bend right = 30] ($(14rect.north)+(-0.15,0)$);
    \end{scope}
\end{tikzpicture}
}
\vspace{0.5em}

\setlength{\tabcolsep}{0.2em} 
\renewcommand{\arraystretch}{1.1}
\resizebox{0.95\textwidth}{!}{
\begin{tabular}{c c c c c c c c c c c c c c c c c c c c c c c c c c c c c c c c}

\toprule
\multicolumn{2}{c}{1} & \multicolumn{2}{c}{2} & \multicolumn{2}{c}{3} & \multicolumn{2}{c}{4} & \multicolumn{2}{c}{5} & \multicolumn{2}{c}{6} & \multicolumn{2}{c}{7} & \multicolumn{2}{c}{8} & \multicolumn{2}{c}{9} & \multicolumn{2}{c}{10} & \multicolumn{2}{c}{11} & \multicolumn{2}{c}{12} & \multicolumn{2}{c}{13} & \multicolumn{2}{c}{14} & \multicolumn{2}{c}{15} & \multicolumn{2}{c}{16}  \\ [-2px]

\tiny{$\inf$} & \tiny{$\sup$} & \tiny{$\inf$} & \tiny{$\sup$} & \tiny{$\inf$} & \tiny{$\sup$} & \tiny{$\inf$} & \tiny{$\sup$} & \tiny{$\inf$} & \tiny{$\sup$} & \tiny{$\inf$} & \tiny{$\sup$} & \tiny{$\inf$} & \tiny{$\sup$} & \tiny{$\inf$} & \tiny{$\sup$} & \tiny{$\inf$} & \tiny{$\sup$} & \tiny{$\inf$} & \tiny{$\sup$} & \tiny{$\inf$} & \tiny{$\sup$} & \tiny{$\inf$} & \tiny{$\sup$} & \tiny{$\inf$} & \tiny{$\sup$} & \tiny{$\inf$} & \tiny{$\sup$} & \tiny{$\inf$} & \tiny{$\sup$} & \tiny{$\inf$} & \tiny{$\sup$} \\ [1px]

\hline & \\[-10px]

\small $\#$ & \small $\#$ & a & a & a & a & a & a & a & a & a & a & b & b & b & b & b & b & b & b & c & c & c & c & c & c & c & c & c & c & c & c \\ [-4px]

\small $\#$ & \small $\#$ & \small $\#$ & a & b & b & b & b & c & c & c & c & a & a & a & c & c & c & c & c & a & a & a & a & a & a & b & b & c & c & c & c \\ [-4px]

\small $\#$ & \small $\#$ & \small $\#$ & c & a & c & c & c & b & b & c & c & b & b & c & a & c & c & c & c & \small $\#$ & a & b & b & c & c & a & c & a & a & c & c \\ [-4px]

\small $\#$ & \small $\#$ & \small $\#$ & c & b & a & c & c & a & c & a & a & a & c & c & b & a & a & c & c & \small $\#$ & c & c & c & b & b & c & a & \small $\#$ & c & a & a \\ [-4px]

\small $\#$ & \small $\#$ & \small $\#$ & a & a & b & a & a & c & a & \small $\#$ & c & b & a & a & c & \small $\#$ & c & a & a & \small $\#$ & c & c & c & a & c & c & b & \small $\#$ & b & \small $\#$ & c \\ [-4px]

\small $\#$ & \small $\#$ & \small $\#$ & c & b & c & \small $\#$ & c & c & b & \small $\#$ & b & a & b & \small $\#$ & c & \small $\#$ & b & \small $\#$ & c & \small $\#$ & a & a & a & c & a & a & c & \small $\#$ & c & \small $\#$ & b \\ [-4px]

\small $\#$ & \small $\#$ & \small $\#$ & b & a & c & \small $\#$ & b & a & c & \small $\#$ & c & b & c & \small $\#$ & a & \small $\#$ & c & \small $\#$ & b & \small $\#$ & c & \small $\#$ & c & c & b & \small $\#$ & c & \small $\#$ & a & \small $\#$ & c \\ [-4px]

\small $\#$ & \small $\#$ & \small $\#$ & c & b & a & \small $\#$ & c & \small $\#$ & c & \small $\#$ & a & a & c & \small $\#$ & c & \small $\#$ & a & \small $\#$ & c & \small $\#$ & b & \small $\#$ & b & a & c & \small $\#$ & a & \small $\#$ & b & \small $\#$ & a \\

\bottomrule
\end{tabular}
} 
\end{center}

\caption{The figure shows: $(a)$ a DFA $\mathcal D$, $(b)$ the sequences $\mathsf{OUT\_DEG}$ and $\mathsf{OUT}$ representing the outgoing transitions of $\mathcal D$, $(c)$ the infimum pseudoforest of $\mathcal D$, $(d)$ the supremum pseudoforest of $\mathcal D$, and $(e)$ the DFA $\mathcal D$ partitioned into its maximal unary paths.
The table below shows the infimum and supremum strings in $\mathcal D$.}
\label{fig: example}
\end{figure}

\section{Experiments}\label{sec: experiments}

We implemented our novel indexing technique based on the Graph Suffix Array (Algorithm~\ref{alg:computeT}), in C++ and made the source code available at \url{https://github.com/regindex/Gr-SA/tree/ictcs-paper}.
To assess its performance, we compared it with the classical forward search algorithm for BWT-based indexes.
Since to the best of our knowledge there exists no public code of this latter algorithm, we also provided an implementation of the forward search on C++.
We ran our algorithm on Wheeler pangenome graphs constructed from the dbSNP database, encoding the Finnish subset of the frequent DNA mutations~\cite{HelsinkiData}.
Specifically, we use pangenome graphs constructed from the genomic data related to the chromosomes 21 and 14.
All experiments were run on a server equipped with an Intel Xeon W-2245 CPU (3.90 GHz, 8 cores) and 128 GB of RAM, running Ubuntu 18.04 LTS 64-bit.

\paragraph{Implementation details.}
We emphasize that our practical implementation differs from the description provided in Section~\ref{sec:data-strucure} in the following aspects.
(i) Our solution is optimized for alphabets $\Sigma$ of size $4$, i.e., $\Sigma = \{A, C, G, T \}$.
As consequence of that, we do not use minimal perfect hash functions to navigate between unary paths, but we store the outgoing transitions of the last state $v$ in the unary path through contiguous pairs $(a, u)$, where $a$ is the label of the outgoing transition, and $u$ is a pointer to the destination state.
Scanning this sequence of pairs $(a, u)$ will be particularly fast since $\lvert out(v) \rvert \leq 4$.
(ii) To save memory space, we do not store explicitly the infimum and supremum pseudoforests.
Instead, we use the information contained in the plain representation of the DFA to read the infimum and supremum string when we call the oracle $\oracle$.
In particular, the graph is linearized in such a way to form a HLD on the infimum pseudoforest.
Since most of the nodes have only one incoming edge, this linearization will decrease the number of cache misses also when we read supremum strings.
(iii) We store in a table $T$ all the intervals $]i,j[$ returned by the oracle $\oracle(\beta')$ for every string $\beta'$ of length $k$. This parameter $k$ is chosen is such a way that the table $T$ occupies at most $40\%$ of the total space used by the other components of the index.
When we call $\oracle(\beta)$, we fetch the integers $i$ and $j$ from $T[\beta']$, where $\beta'$ is the suffix of $\beta$ of length $k$.
This allows us to run the binary search of Line~\ref{alg:computeT function locate states} on the restricted interval $[i,j]$, instead of $[1,2n]$.
All these heuristics result in a significant improvement of the time and space performance of our algorithm.

\begin{figure}[htbp]
    \centering
    \includegraphics[width=0.8\textwidth]{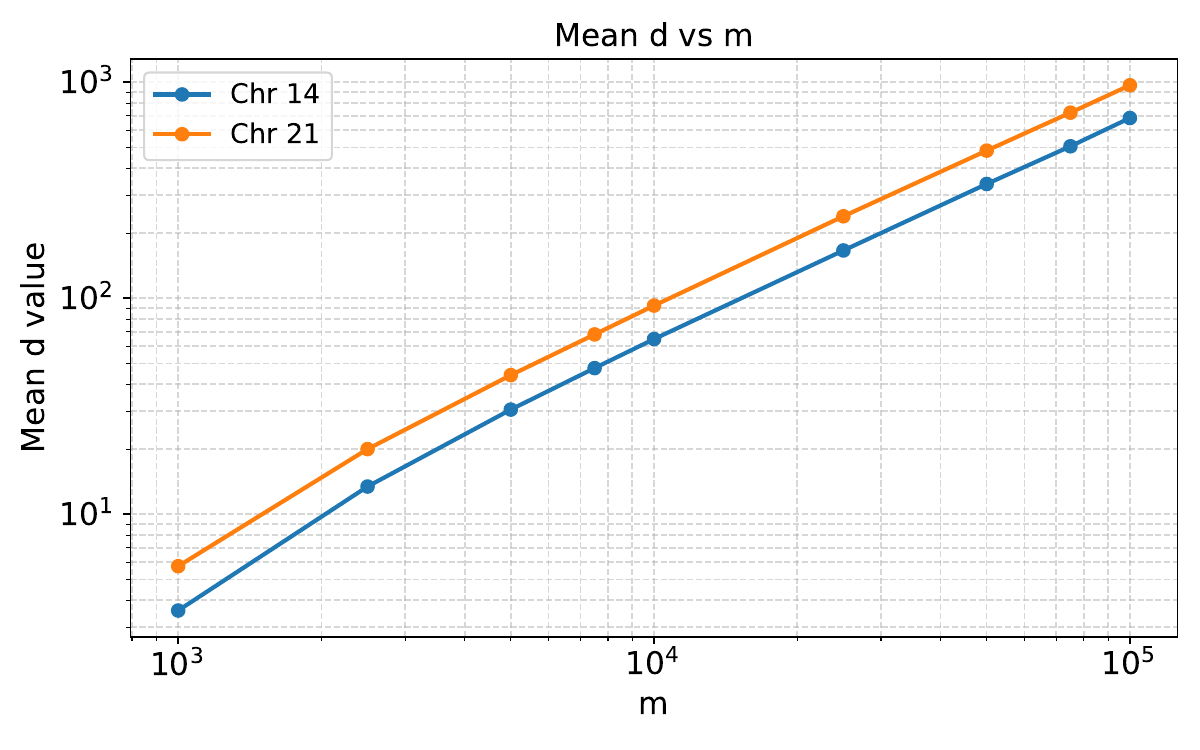}
    \caption{Mean $d$ (number of traversed unary paths) measured during locate($\alpha$) versus length $m$ (log-log scale). The mean is computed across $10^4$ random queries, where each string $\alpha$ of length $m$ is sampled from the graph.}
    \label{fig:d-plot}
\end{figure}

\paragraph{Experiments description.}
We conduct two types of experiments.
In the first experiment we empirically measure the value $d$, representing the number of encountered unary paths in the navigation step.
To do so, we perform $10^4$ locate queries for every analyzed length of the pattern, and we measure the number of unary path traversed in the third phase of the algorithm.
The results are shown in Figure~\ref{fig:d-plot}.
As we can observe, due to their sparseness, the typical values of $d$ are on both graphs about two orders of magnitude smaller compared to the pattern length.
This shows that the number of cache misses triggered during the third phase of Algorithm~\ref{alg:computeT} (i.e., from Line~\ref{alg:computeT navigating start} to Line~\ref{alg:computeT navigating end}) does not have a significant impact on the algorithm performance.
Next, we compare the time and space efficiency of our index (Gr-SA) with respect to the original forward search (Fw.) algorithm for Wheeler graphs (Fw.)~\cite{Wheeler-graphs}.
We considered only patterns appearing in the graphs and we reported the average time to process a single character of those patterns.
The results are summarized in Table~\ref{tab:dataset_properties}.

\begin{table}[htbp]
\centering
\resizebox{\textwidth}{!}{
\begin{tabular}{lcccccccccc}
\toprule
\multirow{2}{*}{\textbf{Dataset}} & \multicolumn{2}{c}{\textbf{WDFA Properties}} & \multicolumn{3}{c}{\textbf{Index Size}} & \multicolumn{2}{c}{\textbf{Queries}} & \multicolumn{3}{c}{\textbf{Execution Time}} \\
\cmidrule(r){2-3} \cmidrule(lr){4-6} \cmidrule(lr){7-8} \cmidrule(l){9-11}
& $n$ & $\vert \delta\rvert$ & Fw. & Gr-SA & Blowup & $m$ & $m/\lvert\beta\rvert$ & Fw. & Gr-SA & Speedup \\
\midrule

\multirow{2}{*}{Chr $21$} & \multirow{2}{*}{$3.85\times 10^7$} & \multirow{2}{*}{$3.88\times 10^7$} & \multirow{2}{*}{$38.1$ MB} & \multirow{2}{*}{$528.9$ MB} & \multirow{2}{*}{$\times13.9$} & $10^3$ & $0.58$ & $968.5$ ns & $9.4$ ns & $\times103$\\
& & & & & & $10^4$ & $0.09$ & $976.8$ ns & $2.4$ ns & $\times410$ \\
\midrule
\multirow{2}{*}{Chr $14$} & \multirow{2}{*}{$9.44\times 10^7$} & \multirow{2}{*}{$9.47\times 10^7$} & \multirow{2}{*}{$93.2$ MB} & \multirow{2}{*}{$1374.7$ MB} & \multirow{2}{*}{$\times14.8$} & $10^3$ & $0.70$ & $1199.0$ ns & $10.0$ ns & $\times119$\\
& & & & & & $10^4$ & $0.30$ & $1192.8$ ns & $2.4$ ns & $\times501$\\
\bottomrule
\end{tabular}
} 
\caption{The table shows the space usage and the average query time used by the forward search (Fw) and the Graph Suffix Array (Gr-SA) on the pangenomes built from chromosomes 21 and 14.
Specifically, the table shows: (i)~the number of states and transitions in the pangenome graph; (ii)~the total space usage for both indexes; (iii)~the length $m$ of the patterns and the average ratio of $m$ to the length of the longest prefix appearing at least twice in the graph; (iv)~the average execution time per character of the pattern.}
\label{tab:dataset_properties}
\end{table}

\end{document}

%% file: arxiv_macros.tex
\let\epsilon\varepsilon

\usepackage{xspace}

\usepackage[linesnumbered,ruled,vlined,boxed]{algorithm2e}
\DontPrintSemicolon
\SetKwInOut{Input}{Input}\SetKwInOut{Output}{Output}
\SetKwProg{Fn}{Function}{:}{}
\SetKwFunction{LocateStates}{LocateStates}

\usepackage[dvipsnames]{xcolor}

\usepackage{tikz}
\usepackage{tikz}
\usetikzlibrary{automata}
\usetikzlibrary{positioning}
\usetikzlibrary{math} 
\usetikzlibrary{arrows.meta}
\usetikzlibrary{fit}
\usetikzlibrary{calc}

\PassOptionsToPackage{usenames,dvipsnames,svgnames}{xcolor}
\definecolor{orange}{RGB}{235,90,0}
\definecolor{darkorange}{RGB}{175,30,0}
\definecolor{turkis}{RGB}{131,182,182}
\definecolor{darkturkis}{RGB}{31,82,82}
\definecolor{green}{RGB}{102,180,0}
\definecolor{darkgreen}{RGB}{51,90,0}
\definecolor{myblue}{RGB}{0,0,213}
\definecolor{mydarkblue}{RGB}{0,0,100}
\definecolor{mybrightblue}{HTML}{74B0E4}
\definecolor{mybrighterblue}{HTML}{B3EAFA}
\definecolor{lila}{RGB}{102,0,102}
\definecolor{darkred}{RGB}{139,0,0}
\definecolor{darkyellow}{RGB}{188,135,2}
\definecolor{brightgray}{RGB}{200,200,200}
\definecolor{darkgray}{RGB}{50,50,50}
\definecolor{amaranth}{rgb}{0.9, 0.17, 0.31}
\definecolor{alizarin}{rgb}{0.82, 0.1, 0.26}
\definecolor{amber}{rgb}{1.0, 0.75, 0.0}
\definecolor{green(ryb)}{rgb}{0.4, 0.69, 0.2}
\definecolor{hanblue}{rgb}{0.27, 0.42, 0.81}
\definecolor{grannysmithapple}{rgb}{0.66, 0.89, 0.63}

\newtheorem{theorem}{Theorem}[section]
\newtheorem{lemma}[theorem]{Lemma}
\newtheorem{definition}[theorem]{Definition}
\newtheorem{corollary}[theorem]{Corollary}

\newtheorem{remark}[theorem]{Remark}

\newenvironment{acknowledgements}{%
  \begin{abstract}
}{%
  \end{abstract}
}